\documentclass[fleqn,11pt]{eptcs}
\usepackage{breakurl}                 

\usepackage[fleqn]{amsmath}
\usepackage{amssymb}
\usepackage{amsthm}
\usepackage{color}
\usepackage{eucal}
\usepackage{graphicx}
\usepackage{graphs}
\usepackage{procalg}
\usepackage{url}

\theoremstyle{definition}
\newtheorem{theorem}{Theorem}[section]
\newtheorem{corollary}[theorem]{Corollary}
\newtheorem{lemma}[theorem]{Lemma}
\newtheorem{proposition}[theorem]{Proposition}

\newtheorem{definition}[theorem]{Definition}
\newtheorem{example}[theorem]{Example}

\newcommand{\ACP}[1][\Act,\gamma]{\ensuremath{\mathrm{ACP}(#1)}}
\newcommand{\ACPt}[1][\Act,\gamma]{\ensuremath{\mathrm{ACP}_{\tau}(#1)}}
\newcommand{\ACPs}[1][\Act,\gamma]{\ensuremath{\mathrm{ACP}^{*}(#1)}}

\newcommand{\ACPsz}[1][\Act,\gamma]{\ensuremath{\mathrm{ACP}_{\dl}^{*}(#1)}}
\newcommand{\ACPszo}[1][\Act,\gamma]{\ensuremath{\mathrm{ACP}_{\dl,\emp}^{*}(#1)}}
\newcommand{\ACPszt}[1][\Act,\gamma]{\ensuremath{\mathrm{ACP}_{\dl,\tau}^{*}(#1)}}
\newcommand{\ACPszot}[1][\Act,\gamma]{\ensuremath{\mathrm{ACP}_{\dl,\emp,\tau}^{*}(#1)}}
\newcommand{\BPA}[1][\Act]{\ensuremath{\mathrm{BPA}(#1)}}
\newcommand{\BPAd}[1][\Act]{\ensuremath{\mathrm{BPA}_{\delta}(#1)}}
\newcommand{\BPAsd}[1][\Act]{\ensuremath{\mathrm{BPA}^{*}_{\delta}(#1)}}

\newcommand{\BPAsz}[1][\Act]{\ensuremath{\mathrm{BPA}^{*}_{\dl}(#1)}}
\newcommand{\BPAszo}[1][\Act]{\ensuremath{\mathrm{BPA}^{*}_{\dl,\emp}(#1)}}
\newcommand{\PA}[1][\Act]{\ensuremath{\mathrm{PA}(#1)}}
\newcommand{\PAd}[1][\Act]{\ensuremath{\mathrm{PA}_{\delta}(#1)}}
\newcommand{\PAsd}[1][\Act]{\ensuremath{\mathrm{PA}^{*}_{\delta}(#1)}}

\newcommand{\PAsz}[1][\Act]{\ensuremath{\mathrm{PA}^{*}_{\dl}(#1)}}
\newcommand{\PAszo}[1][\Act]{\ensuremath{\mathrm{PA}^{*}_{\dl,\emp}(#1)}}
\newcommand{\OC}[1]{\ensuremath{\#(#1)}}
\newcommand{\lexp}{\prec}

\renewcommand{\max}[1]{\mathrm{max}\{#1\}}

\newcommand{\notstep}[1]{\cnot{\step{#1}}}
\newcommand{\notmstep}[1]{\mathrel{{\cnot{\step{}}}^{+}}}
\newcommand{\notsteps}[1]{\mathrel{{\cnot{\step{}}}^{*}}}
\newcommand{\PEXP}[1][]{\ensuremath{\mathcal{P}_{#1}}}
  \newcommand{\ACPszoExp}{\PEXP[\ACPszo]}
  \newcommand{\PAszoExp}{\PEXP[\PAszo]}
  \newcommand{\BPAszoExp}{\PEXP[\BPAszo]}
\newcommand{\stepsym}[1][]{\ensuremath{\mathalpha{\rightarrow_{#1}}}}
\renewcommand{\step}[2][]{\ensuremath{\mathbin{\arrow{#2}_{#1}}}}
  \newdimen\boxwdplusemdimen
  \def\arrow#1{{
     \boxwdplusemdimen=1em%
     \setbox0=\hbox{$\scriptstyle#1$}%
     \advance\boxwdplusemdimen by \wd0\relax%
     \ifdim\boxwdplusemdimen<16.11119pt%
       \boxwdplusemdimen=16.11119pt%
     \fi%
     \buildrel{#1}\over%
       {\setbox1=\hbox to \boxwdplusemdimen{\rightarrowfill}%
     \ht1=0.3em\relax\box1}%
   }}
\renewcommand{\term}[2][]{\ensuremath{{#2}\mathclose{\downarrow_{#1}}}}
\newcommand{\termsym}[1][]{\ensuremath{\mathalpha{\downarrow_{#1}}}}
\newcommand{\notterm}[2][]{\ensuremath{{#2}\mathclose{{\cnot\downarrow}_{#1}}}}
\newcommand{\brelsym}[1][]{\ensuremath{\mathalpha{\mathcal{R}_{#1}}}}
\newcommand{\brel}[1][]{\ensuremath{\mathrel{\brelsym[#1]}}}
\newcommand{\TSS}[1][]{\ensuremath{\mathcal{T}_{#1}}}
  \newcommand{\ACPszoTSS}{\TSS[\ACPszo]}
  \newcommand{\PAszoTSS}{\TSS[\PAszo]}
  \newcommand{\BPAszoTSS}{\TSS[\BPAszo]}
\newcommand{\FA}[1][]{\ensuremath{\mathcal{F}_{#1}}}
\newcommand{\Ext}[1]{\ensuremath{ET(#1)}}
\newcommand{\Extn}[1]{\ensuremath{ET_{n}(#1)}}

\newcommand{\Reach}[1]{\ensuremath{[#1]_{\rightarrow}}}
\newcommand{\steprtc}[1][]{\ensuremath{\mathbin{\step[#1]{}^{*}}}}
\newcommand{\steptc}[1][]{\ensuremath{\mathbin{\step[#1]{}^{+}}}}
\newcommand{\nfa}{finite automaton}
\newcommand{\nfas}{finite automata}

\newcommand{\Lemma}[1]{Lemma~\ref{lem:#1}}

\title{Expressiveness modulo Bisimilarity of Regular Expressions
  with Parallel Composition\\
  \normalfont (Extended Abstract)
}

\author{%
  Jos C. M. Baeten
    \institute{Eindhoven University of Technology, The Netherlands}
    \email{j.c.m.baeten@tue.nl}
\and
  Bas Luttik
    \institute{Eindhoven University of Technology, The Netherlands}
    \institute{Vrije Universiteit Amsterdam, The Netherlands}
    \email{s.p.luttik@tue.nl}
\and
  Tim Muller
    \institute{\hspace*{22pt}University of Luxembourg, Luxembourg\hspace*{22pt}}
    \email{tim.muller@uni.lu}
\and
  Paul van Tilburg
    \institute{Eindhoven University of Technology, The Netherlands}
    \email{p.j.a.v.tilburg@tue.nl}
}

\begin{document}
\maketitle

\begin{abstract}
  The languages accepted by finite automata are precisely the
  languages denoted by regular expressions. In contrast, finite
  automata may exhibit behaviours that cannot be described by regular
  expressions up to bisimilarity. In this paper, we consider
  extensions of the theory of regular expressions with various forms
  of parallel composition and study the effect on
  expressiveness. First we prove that adding pure interleaving to the
  theory of regular expressions strictly increases its expressiveness
  modulo bisimilarity. Then, we prove that replacing the operation for
  pure interleaving by ACP-style parallel composition gives a further
  increase in expressiveness. Finally, we prove that the theory of
  regular expressions with ACP-style parallel composition and
  encapsulation is expressive enough to express all \nfas{} modulo
  bisimilarity.  Our results extend the expressiveness results
  obtained by Bergstra, Bethke and Ponse for process algebras with
  (the binary variant of) Kleene's star operation.
\end{abstract}

\section{Introduction}

A well-known theorem by Kleene states that the languages accepted by
\nfas{} are precisely the languages denoted by a regular expression
(see, e.g.,~\cite{HMU06}). Milner, in~\cite{Mil84}, showed how regular
expressions can be used to describe \emph{behaviour} by defining an
interpretation of regular expressions directly as \nfas{}. He then
observed that the process-theoretic counterpart of Kleene's theorem
---stating that every \nfa{} is described by a regular expression---
fails: there exist \nfas{} whose behaviours cannot faithfully, i.e.,
up to bisimilarity, be described by regular expressions. Baeten,
Corradini and Grabmayer~\cite{BCG07} recently found a structural
property on \nfas{} that characterises those that are denoted with a
regular expression modulo bisimilarity. In this paper, we study to
what extent the expressiveness of regular expressions increases when
various forms of parallel composition are added.

Our first contribution, in Section~\ref{sec:bpa-pa}, is
to show that adding an operation for pure interleaving to regular
expressions strictly increases their expressiveness modulo
bisimilarity. A crucial step in our proof consists of characterising
the strongly connected components in \nfas{} denoted by regular
expressions. The characterisation allows us to prove a property
pertaining to the exit transitions from such strongly connected
components. If interleaving is added, then it is possible to denote
\nfas{} violating this property.

Our second contribution, in Section~\ref{sec:pa-acp},
is to show that replacing the operation for pure interleaving by
ACP-style parallel composition~\cite{BK84}, which implements a form of
synchronisation by communication between components, leads to a
further increase in expressiveness. To this end, we first characterise
the strongly connected components in \nfas{} denoted by regular
expressions with interleaving, and deduce a property on the
exit transitions from such strongly connected components. Then, we
present an expression in the theory of regular expressions with
ACP-style parallel composition that denotes a \nfa{} violating this
property.

Our third contribution, in Section~\ref{sec:LPS-ACP},
is to establish that adding ACP-style parallel composition and
encapsulation to the theory of regular expressions actually yields a
theory in which every \nfa{} can be expressed up to isomorphism, and
hence, since bisimilarity is coarser than isomorphism, also up to
bisimilarity. Every expression in the resulting theory, in turn,
denotes a \nfa{}, so this result can be thought of as an alternative
process-theoretic counterpart of Kleene's theorem.

The results in this paper are inspired by the results of Bergstra,
Bethke and Ponse pertaining to the relative expressiveness of process
algebras with a binary variant of Kleene's star operation. In
\cite{BBP94} they establish an expressiveness hierarchy on the
extensions of the process theories \BPA{}, \BPAd{}, \PA{}, \PAd{},
\ACP{}, and \ACPt{} with binary Kleene star. The reason that their
results are based on extensions with the binary version of the Kleene
star is that they want to avoid the process-theoretic complications
arising from the notion of intermediate termination (we say that a
state in a \nfa{} is intermediately terminating if it is terminating
but also admits a transition). Most of the expressiveness results
in~\cite{BBP94} are included in \cite{BFP01}, with more elaborate
proofs.

Casting our contributions mentioned above in process-theoretic
terminology, we establish a strict expressiveness hierarchy on the
process theories \BPAszo{} (regular expressions) modulo bisimilarity,
\PAszo{} (regular expressions with interleaving) modulo bisimilarity
and \ACPszo{} (regular expressions with ACP-style parallel composition
and encapsulation) modulo bisimilarity. The differences between the
process theories \BPAd{}, \PAd{} and \ACP{}
considered~\cite{BBP94,BFP01} and the process theories \BPAszo{},
\PAszo{} and \ACPszo{} considered in this paper are as follows: we
write $\dl$ for the constant deadlock which is denoted by $\delta$
in~\cite{BBP94,BFP01}, we include the unary Kleene star instead of its
binary variant, and we include a constant $\emp$ denoting the
successfully terminated process. The first difference is, of course,
cosmetic, and with the addition of the constant $\emp$ the unary and
binary variants of Kleene's star are interdefinable. So, our results
pertaining to the relative expressiveness of \BPAszo{}, \PAszo{} and
\ACPszo{} extend the expressiveness hierarchy of~\cite{BBP94,BFP01}
with the constant $\emp$.

In~\cite{BFP01} the expressiveness proofs are based on identifying
cycles and exit transitions from these cycles. There are two reasons
why the proofs in~\cite{BBP94} and~\cite{BFP01} cannot easily be
adapted to a setting with $\emp$.  First, in a setting with $\emp$ and
Kleene star there are cycles without any exit transitions. Second, the
inclusion of the empty process $\emp$ gives intermediate termination,
which, combined with the previously described different behaviour of
cycles, forces us to consider the more general structure of strongly
connected component.

\section{Preliminaries}\label{sec:prelim}

In this section, we present the relevant definitions for the process
theory \ACPszo{} and its subtheories \PAszo{} and \BPAszo{}. We give
their syntax and operational semantics, and the notion of (strong)
bisimilarity. We also introduce some auxiliary technical notions that
we need in the remainder of the paper, most notably that of strongly
connected component. The expressions of the process theory \BPAszo{}
are precisely the well-known regular expressions from the theory of
automata and formal languages, but we shall consider the automata
associated with them modulo bisimilarity instead of modulo language
equivalence.

The process theory $\ACPszo{}$ is parametrised by a non-empty set
$\Act$ of \emph{actions}, and a \emph{communication function} $\gamma$
on $\Act$, i.e., an associative and commutative binary partial
operation $\gamma: \Act\times\Act \rightharpoonup \Act$. $\ACPszo{}$
incorporates a form of synchronisation between the components of a
parallel composition by allowing certain actions to engage in a
\emph{communication} resulting in another action.  The communication
function $\gamma$ then defines which actions may communicate and what
is the result. The details of this feature will become clear when we
present the operational semantics of parallel composition.

The set of $\ACPszo$ expressions $\ACPszoExp$ is generated by
the following grammar:
\begin{equation*}
  p ::=\
    \dl\ \mid\
    \emp\  \mid\
    a\ \mid\
    p\seqc p\ \mid\
    p\altc p\ \mid\
    p\uks\ \mid\
    p\parc p\ \mid\
    \encap{H}{p}
\enskip,
\end{equation*}
with $a$ ranging over $\Act$ and $H$ ranging over subsets of $\Act$.

The process theory $\ACP{}$ (excluding the constants $\dl$ and $\emp$,
but including a constant $\delta$ with exactly the same behaviour as
$\dl$, and without the operation $\uks$) originates with~\cite{BK84}.
The extension of $\ACP{}$ with a constant $\emp$ was investigated by
\cite{KV85,BG87,Vra97} (in these articles, the constant was denoted
$\varepsilon$). The extension of $\ACP{}$ with the binary version of
the Kleene star was first proposed in~\cite{BBP94}.
  The reader already familiar with the process theory \ACPszo{} will
  have noticed that the operations $\lmerge$ (\emph{left merge}) and
  $\comm$ (\emph{communication merge}) are missing from our syntax
  definition. In \cite{BK84}, these operations are included as
  auxiliary operations necessary for a finite axiomatisation of the
  theory.  They do not, however, add expressiveness in our setting
  with Kleene star instead of a general form of recursion. We have
  omitted them to achieve a more efficient presentation of our
  results.

The constants $\dl$ and $\emp$ respectively stand for the deadlocked
process and the successfully terminated process, and the constants
$a\in\Act$ denote processes of which the only behaviour is to
execute the action $a$. An expression of the form $p\seqc q$ is called
a \emph{sequential composition}, an expression of the form $p\altc q$
is called an \emph{alternative composition}, and an expression of the
form $p\uks$ is called a \emph{star expression}. An expression of the
form $p\parc q$ is called a \emph{parallel composition}, and an
expression of the form $\encap{H}{p}$ is called an
\emph{encapsulation}.

From the names for the constructions in the syntax of \ACPszo{}, the
reader probably has already an intuitive understanding of the
behaviour of the corresponding processes. We proceed to formalise the
operational behaviour by means of a collection of operational rules
(see Table~\ref{tab:ACPszoSOS}) in the style of Plotkin's Structural
Operational Semantics~\cite{Plo04}.  Note how the communication
function in rule~\ref{osrule:merge3} is employed to model a form of
communication between parallel components: if one of the components of
a parallel composition can execute a transition labelled with $a$, the
other can execute a transition labelled with $b$, and the
communication function $\gamma$ is defined on $a$ and $b$, then the
parallel composition can execute a transition labelled with
$\gamma(a,b)$. (It may help to think of the action $a$ as standing for
the event of sending some datum $d$, the action $b$ as standing for
the event of receiving datum $d$, and the action $\gamma(a,b)$ as
standing for the event that two components communicate datum $d$.)
The $\Act$-labelled transition relation $\stepsym[\ACPszo]$ and the
termination relation $\termsym[\ACPszo]$ on $\ACPszoExp$ are the least
relations $\stepsym\subseteq\ACPszoExp\times\Act\times\ACPszoExp$ and
$\termsym\subseteq\ACPszoExp$ satisfying the rules in
Table~\ref{tab:ACPszoSOS}.

\begin{table}[htb]
  \hrule
  \begin{osrules}
   \osrule{}{\term{\emp}}\label{osrule:empterm} \quad
   \osrule{}{a \step{a} \emp}\label{osrule:astep} \quad
   \osrule{p \step{a} p'}{p \altc q \step{a} p'}\label{osrule:altcstepl} \quad
   \osrule{q \step{a} q'}{p \altc q \step{a} q'}\label{osrule:altcstepr} \quad
   \osrule{\term{p}}{\term{p \altc q}}\label{osrule:altcterml} \quad
   \osrule{\term{q}}{\term{p \altc q}}\label{osrule:altctermr}
  \\
   \osrule{p \step{a} p'}{p \seqc q \step{a} p' \seqc q}\label{osrule:seqcstepl} \quad
   \osrule{\term{p} & q \step{a} q'}{p \seqc q \step{a} q'}\label{osrule:seqcstepr} \quad
   \osrule{\term{p} & \term{q}}{\term{p \seqc q}}\label{osrule:seqcterm}\quad
   \osrule{p \step{a} p'}{p\uks \step{a} p' \seqc p\uks}\label{osrule:starstep} \quad
   \osrule{}{\term{p\uks}}\label{osrule:starterm}
  \\
   \osrule{p \step{a} p'}%
          {p \parc q \step{a} p' \parc q}\label{osrule:merge1} \quad
   \osrule{q \step{a} q'}%
          {p \parc q \step{a} p \parc q'}\label{osrule:merge2} \quad
   \osrule{\term{p} & \term{q}}%
          {\term{p \parc q}}\label{osrule:merge3}
  \\[1em]
   \osrule{p \step{a} p' & q \step{b} q' & \text{$\gamma(a,b)$ is defined}}
          {p \parc q \step{\gamma(a,b)} p' \parc q'}\label{osrule:comm} \quad
   \osrule{p \step{a} p' & a \not\in H}%
          {\encap{H}{p} \step{a} \encap{H}{p'}}\label{osrule:encap} \quad
   \osrule{\term{p}}{\term{\encap{H}{p}}}\label{osrule:encap2}
  \end{osrules}
  \hrule
  \caption{Operational rules for $\ACPszo$, with $a\in\Act$ and $H\subseteq\Act$.}\label{tab:ACPszoSOS}
\end{table}

The triple
  $\ACPszoTSS=(\ACPszoExp,\stepsym[\ACPszo],\termsym[\ACPszo])$,
consisting of the $\ACPszo$ expressions together with
the $\Act$-labelled transition relation and the termination predicate
associated with them, is an example of an $\Act$-labelled transition
system space. In general, an \emph{$\Act$-labelled transition system
  space} $(S,\stepsym,\termsym)$ consists of a (non-empty) set $S$, the
elements of which are called \emph{states}, together with an
$\Act$-labelled transition relation
  $\stepsym\subseteq S\times\Act\times S$
and a subset
  $\termsym\subseteq S$.
We shall in this paper consider two more examples of transition system
spaces, obtained by restricting the syntax of $\ACPszo$ and making
special assumptions about the communication function.

Next, we define the $\Act$-labelled transition system space
$\PAszoTSS=(\PAszoExp,\stepsym[\PAszo],\termsym[\PAszo])$
corresponding with the process theory $\PAszo$.  The set of $\PAszo$
expressions $\PAszoExp$ consists of the $\ACPszo$ process
expressions without occurrences of the construct $\encap{H}{\_}$.
The $\PAszo$ transition relation $\stepsym[\PAszo]$
on $\PAszoExp$ and the termination predicate $\termsym[\PAszo]$ on
$\PAszoExp$ are the transition relation and termination predicate
induced on $\PAszo$ expressions by the operational rules in
Table~\ref{tab:ACPszoSOS} minus the rules~\ref{osrule:comm}--%
\ref{osrule:encap2}. Alternatively (and equivalently) the transition
relation $\stepsym[\PAszo]$ can be defined as the restriction of the
transition relation $\stepsym[{\ACPszo[\Act,\emptyset]}]$, with
$\emptyset$ denoting the communication function that is everywhere
undefined, to $\PAszoExp$.

To define the $\Act$-labelled transition system space
  $\BPAszoTSS=(\BPAszoExp,\stepsym[\BPAszo],\termsym[\BPAszo])$
associated with the process theory $\BPAszo$, let $\BPAszoExp$ consist
of all $\PAszo$ expressions without occurrences of the construct
  ${\_}\parc{\_}$.
The $\BPAszo$ transition relation
$\stepsym[\BPAszo]$ and the $\BPAszo$ termination predicate
$\termsym[\BPAszo]$ are the transition relation and the termination
predicate induced on $\BPAszo$ expressions by the operational
rules in Table~\ref{tab:ACPszoSOS} minus the rules
\ref{osrule:merge1}--\ref{osrule:encap2}. That is, $\stepsym[\BPAszo]$
and $\termsym[\BPAszo]$ are obtained by restricting
$\stepsym[\ACPszo]$ and $\termsym[\ACPszo]$ to $\BPAszoExp$.

Henceforth, we shall omit the subscripts $\ACPszo$, $\PAszo$ and
$\BPAszo$ from transition relations and termination predicates
whenever it is clear from the context which transition relation or
termination predicate is meant. Furthermore, we shall often use
$\ACPszo$, $\PAszo$ and $\BPAszo$, respectively, to denote the
associated transition system spaces $\ACPszoTSS$, $\PAszoTSS$ and
$\BPAszoTSS$.

Let $\TSS=(S,\stepsym,\termsym)$ be an $\Act$-labelled transition
system space.  If $s,s'\in S$, then we write $s\step{}s'$ if there
exists $a\in\Act$ such that $s\step{a}s'$, and $s\notstep{}s'$ if
there exists no such $a\in\Act$. We denote by $\stepsym^{+}$ the
transitive closure of $\stepsym$, and by $\stepsym^{*}$ the
reflexive-transitive closure of $\stepsym$.  If $s\steprtc{} s'$ then
we say that $s'$ is \emph{reachable} from $s$; the set of all states
reachable from $s$ is denoted by $\Reach{s}$.  We say that a state $s$
is \emph{normed} if there exists $s'$ such that $s\steprtc{}s'$ and
$\term{s'}$.
$\TSS$ is called \emph{regular} if $\Reach{s}$ is finite for all $s\in
S$.

\begin{lemma}\label{lem:regularity}
  The transition system spaces \ACPszo{}, \PAszo{}, and \BPAszo{} are
  all regular.
\end{lemma}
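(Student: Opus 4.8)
The plan is to prove, by induction on the structure of an $\ACPszo$ expression $p$, that $\Reach{p}$ is finite. The statements for $\PAszo$ and $\BPAszo$ then follow immediately: as noted in the preliminaries, their transition relations are restrictions of that of $\ACPszo$ (for $\PAszo$, take the communication function $\emptyset$), so any path in $\PAszo$ or $\BPAszo$ is also a path in the corresponding $\ACPszo$ transition system space, and hence $\Reach{p}$ computed in $\PAszo$ or $\BPAszo$ is a subset of $\Reach{p}$ computed in $\ACPszo$.

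For the induction, in each case I would exhibit an explicit finite set $T(p)$, built from the reachable sets of the immediate subexpressions of $p$ (finite by the induction hypothesis), show that $p\in T(p)$ and that $T(p)$ is closed under $\stepsym$, and conclude $\Reach{p}\subseteq T(p)$. The base cases are immediate: $\Reach{\dl}=\{\dl\}$, $\Reach{\emp}=\{\emp\}$, and $\Reach{a}=\{a,\emp\}$. For alternative composition, $T(p\altc q)=\{p\altc q\}\cup\Reach{p}\cup\Reach{q}$ works, since by rules~\ref{osrule:altcstepl}--\ref{osrule:altcstepr} every transition from $p\altc q$ leads into $\Reach{p}$ or $\Reach{q}$, and those sets are closed under $\stepsym$. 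For encapsulation, $T(\encap{H}{p})=\{\encap{H}{p'} : p'\in\Reach{p}\}$; inspection of rule~\ref{osrule:encap} shows it is closed. For parallel composition the key observation is that every transition from a state of the form $p'\parc q'$ again has the shape $p''\parc q''$ with the components still in $\Reach{p}$ and $\Reach{q}$ respectively --- this holds for the interleaving rules~\ref{osrule:merge1}--\ref{osrule:merge2} and also for the communication rule~\ref{osrule:comm} --- so $T(p\parc q)=\{p'\parc q' : p'\in\Reach{p},\ q'\in\Reach{q}\}$ works, and it is finite as a subset of a product of two finite sets.

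The cases needing a little more care are sequential composition and the Kleene star, because on termination of the left-hand argument a transition ``jumps'' to a new subexpression. For $p\seqc q$ I would show, by induction on the length of a derivation, that every state reachable from $p\seqc q$ is either of the form $p'\seqc q$ with $p'\in\Reach{p}$ or lies in $\Reach{q}$: a transition out of $p'\seqc q$ either keeps the shape $p''\seqc q$ with $p''\in\Reach{p}$ (rule~\ref{osrule:seqcstepl}), or, using $\term{p'}$, enters $\Reach{q}$ (rule~\ref{osrule:seqcstepr}), after which one stays in $\Reach{q}$; hence $T(p\seqc q)=\{p'\seqc q : p'\in\Reach{p}\}\cup\Reach{q}$ is a finite superset of $\Reach{p\seqc q}$. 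For $p\uks$ the analogous claim is that every reachable state is $p\uks$ itself or of the form $p'\seqc p\uks$ with $p'\in\Reach{p}$: from $p\uks$ (rule~\ref{osrule:starstep}) and from $p'\seqc p\uks$ with $\term{p'}$ one again reaches $p''\seqc p\uks$ with $p\step{a}p''$, hence $p''\in\Reach{p}$, so $T(p\uks)=\{p\uks\}\cup\{p'\seqc p\uks : p'\in\Reach{p}\}$ is finite. The only genuinely non-routine point of the whole argument is setting up these last two claims with the right auxiliary induction hypothesis; everything else is a direct inspection of the operational rules of Table~\ref{tab:ACPszoSOS}.
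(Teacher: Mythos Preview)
Your argument is correct and is the standard structural-induction proof of regularity for these calculi; the paper, being an extended abstract, states Lemma~\ref{lem:regularity} without proof, so there is no alternative approach to compare against. One small presentational point: in the Kleene-star case you explicitly mention only the transitions out of $p'\seqc p\uks$ that use $\term{p'}$ and rule~\ref{osrule:seqcstepr}, but of course rule~\ref{osrule:seqcstepl} (a step of $p'$ itself) must also be checked---you handled exactly this when treating $p\seqc q$, so the omission is clearly just brevity rather than a gap.
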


With every state $s$ in $\TSS{}$ we can associate an
\emph{automaton}
(or: \emph{transition system})
  $(\Reach{s},
    \stepsym\cap(\Reach{s}\times\Act\times\Reach{s}),
    \termsym\cap\Reach{s},\
    s)$.
Its states are the states reachable from $s$, its transition relation
and termination predicate are obtained by restricting $\stepsym$ and
$\termsym$ accordingly, and the state $s$ is declared as the
\emph{initial state} of the automaton. If a transition system
space is regular, then the automaton associated with a state
in it is finite, i.e., it is a \nfa{} in the terminology of automata
theory. Thus, we get by Lemma~\ref{lem:regularity} that the operational
semantics of \ACPszo{}, and, a fortiori, that of \PAszo{} and \BPAszo{},
associates a \nfa{} with every process expression.

In automata theory, automata are usually considered as language
acceptors and two automata are deemed indistinguishable if they accept
the same languages. Language equivalence is, however, arguably too
coarse in process theory, where the prevalent notion is
bisimilarity \cite{Mil89,Par81}.

\begin{definition}\label{def:bisim}
  Let
    $\TSS[1]=(S_1,\stepsym[1],\termsym[1])$
  and
    $\TSS[2]=(S_2,\stepsym[2],\termsym[2])$
  be transition system spaces.
  A binary relation $\mathcal{R}\subseteq S_1\times S_2$ is a
  \emph{bisimulation} between $\TSS[1]$ and $\TSS[2]$ if it
  satisfies, for all $a\in\Act$ and for all $s_1\in S_1$ and $s_2\in S_2$
  such that $s_1\brel{}s_2$, the following conditions:
  \begin{enumerate}
   \renewcommand{\theenumi}{\roman{enumi}}
   \renewcommand{\labelenumi}{(\theenumi)}
   \item if there exists $s_1'\in S_1$ such that $s_1\step[1]{a}s_1'$, then there
     exists $s_2'\in S_2$ such that $s_2\step[2]{a}s_2'$ and
     $s_1'\mathrel{\mathcal{R}} s_2'$;
   \item if there exists $s_2'\in S_2$ such that $s_2\step[2]{a}s_2'$, then there
     exists $s_1'\in S_1$ such that $s_1\step[1]{a}s_1'$ and
     $s_1'\mathrel{\mathcal{R}} s_2'$; and
   \item $\term[1]{s_1}$ if, and only if, $\term[2]{s_2}$.
  \end{enumerate}
  States $s_1\in S_1$ and $s_2\in S_2$ are \emph{bisimilar} (notation:
  $s_1\bisim s_2$) if there exists a bisimulation $\brelsym$ between
  $\TSS[1]$ and $\TSS[2]$ such that $s_1\brel{}s_2$.
\end{definition}

To achieve a sufficient level of generality, we have defined
bisimilarity as a relation between transition system spaces; to obtain
a suitable notion of bisimulation between automata one should add the
requirement that the initial states of the automata be related.

Based on the associated transition system spaces, we can now define
what we mean when some transition system space is, modulo
bisimilarity, less expressive than some other transition system space.

\begin{definition}\label{def:expressive}
  Let $\TSS[1]$ and $\TSS[2]$ be transition system spaces.
  We say that $\TSS[1]$ is \emph{less expressive} than $\TSS[2]$
  (notation: $\TSS[1]\lexp\TSS[2]$) if every state in $\TSS[1]$
  is bisimilar to a state in $\TSS[2]$, and, moreover, there is a
  state in $\TSS[2]$ that is \emph{not} bisimilar to some state in
  $\TSS[1]$.
\end{definition}

When we investigate the expressiveness of \ACPszo{}, we want to
be able to choose $\gamma$. So, we are actually interested in the
expressiveness of the (disjoint) union of all transition system spaces
\ACPszo{} with $\gamma$ ranging over all communication functions.
We denote this transition system space by
  $\bigcup_\gamma\ACPszo$.
In this paper we shall then establish that
  $\BPAszo\lexp\PAszo\lexp\bigcup_\gamma\ACPszo$.

We recall below the notion of strongly connected component (see,
e.g.,~\cite{CLR01}) that will play an important r\^ole in
establishing that the above hierarchy of transition system spaces is
strict.

\begin{definition}\label{def:scc}
  A \emph{strongly connected component} in a transition system space
  $\TSS{}=(S,\stepsym,\termsym)$ is a maximal subset $C$ of $S$ such
  that $s \steprtc{} s'$ for all $s, s' \in C$.  A strongly connected
  component $C$ is \emph{trivial} if it consists of only one state,
  say $C=\{s\}$, and $s\notstep{} s$; otherwise, it is
  \emph{non-trivial}.
\end{definition}

Note that every element of a transition system space is an element of
precisely one strongly connected component of that space. Furthermore,
if $s$ is an element of a non-trivial strongly connected component,
then $s\step{}^{+}s$. Since in a strongly connected component from
every element every other element can be reached, we get as a
corollary to Lemma~\ref{lem:regularity} that strongly connected
components in \ACPszo{}, \PAszo{} and \BPAszo{} are finite.

Let $\TSS=(S,\stepsym,\termsym)$ be a transition system space, let
$s\in S$, and let $C\subseteq S$ be a strongly connected component in
$S$.  We say that $C$ is \emph{reachable} from $s$ if $s\steprtc{}s'$
for all $s'\in C$.

\begin{lemma}\label{lem:scc-bisim}
  Let $\TSS[1]=(S_1,\stepsym[1],\termsym[1])$ and
  $\TSS[2]=(S_2,\stepsym[2],\termsym[2])$ be regular transition
  system spaces, and let $s_1\in S_1$ and $s_2\in S_2$ be such that
    $s_1\bisim s_2$.
  If $s_1$ is an element of a strongly connected component $C_1$ in
  $\TSS[1]$, then there exists a strongly connected component $C_2$
  reachable from $s_2$ satisfying that
    for all $s_1'\in C_1$ there exists $s_2'\in C_2$ such that
      $s_1'\bisim s_2'$.
\end{lemma}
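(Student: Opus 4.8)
The plan is to extract the component $C_2$ by following, inside $\TSS[2]$, the images under bisimulation of a cycle witnessing that $C_1$ is non-trivial (or to take a single reachable state if $C_1$ is trivial). Fix a bisimulation $\brelsym$ with $s_1\brel{}s_2$. If $C_1=\{s_1\}$ is trivial, then $s_2$ itself lies in some strongly connected component $C_2$, which is trivially reachable from $s_2$, and $s_1\bisim s_2$ gives the required condition immediately. So assume $C_1$ is non-trivial. Since $\TSS[1]$ is regular, $C_1$ is finite; enumerate it as $C_1=\{t_1,\dots,t_k\}$. Because $C_1$ is strongly connected, I can choose a single closed walk $t_1\steptc{}t_2\steptc{}\cdots\steptc{}t_k\steptc{}t_1$ that visits every element of $C_1$ and stays within $C_1$ throughout.

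First I would transfer this walk across the bisimulation. Starting from $s_2$, reachable from which is some state $u_1$ with $s_1\bisim u_1$ (using $s_1\steprtc{}t_1$, which holds since $t_1\in C_1$ and $s_1\in C_1$, and transfer of $\steprtc{}$ along $\brelsym$, proved by induction on path length using clause (i) of Definition~\ref{def:bisim}), and then continuing step by step along the walk, I obtain states $u_1,u_2,\dots,u_k,u_1'$ in $\TSS[2]$, all reachable from $s_2$, with $t_i\bisim u_i$ and $t_1\bisim u_1'$, such that $u_1\steptc{}u_2\steptc{}\cdots\steptc{}u_k\steptc{}u_1'$. Now $\TSS[2]$ is regular, so $\Reach{s_2}$ is finite; since $u_1,u_1'$ both lie on an infinite forward path obtained by iterating this transferred walk (the bisimulation lets us repeat the construction indefinitely), that path must eventually revisit a state, and in fact I can arrange that some state on the cycle returns to itself. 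The cleanest way: iterating the transfer, from $u_1$ I reach $u_1'$ with $t_1\bisim u_1$ and $t_1\bisim u_1'$; repeating, I get an infinite sequence $u_1,u_1',u_1'',\dots$ all bisimilar to $t_1$ and all reachable from each other in the forward direction, hence (finiteness of $\Reach{s_2}$) two of them coincide, giving a state $v$ with $t_1\bisim v$ and $v\steptc{}v$. Thus $v$ lies in a non-trivial strongly connected component $C_2$.

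Next I would check $C_2$ is reachable from $s_2$ and satisfies the stated property. Reachability: $v\in\Reach{s_2}$, and every element of $C_2$ is reachable from $v$, hence from $s_2$. For the property, fix $s_1'\in C_1$; then $s_1'=t_i$ for some $i$. Re-running the transfer argument starting the closed walk at $t_i$ instead of $t_1$ (or simply continuing the already-transferred walk from $v$, which is bisimilar to $t_1$, around to the image of $t_i$), I obtain a state $s_2'$ reachable from $v$ with $t_i\bisim s_2'$, and the same finiteness-plus-iteration trick shows $s_2'$ can be taken inside a non-trivial component; but I must ensure it is the \emph{same} component $C_2$. This is guaranteed because, continuing the transferred closed walk from $v$ all the way around, I return to a state $v'$ with $t_1\bisim v'$, and $v\steptc{}s_2'\steptc{}v'$; iterating once more and using finiteness, the whole cyclic image collapses into one strongly connected component, which must be $C_2$ since it contains $v$. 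Hence $s_2'\in C_2$ with $s_1'\bisim s_2'$, as required.

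The main obstacle is the middle step: turning the transferred \emph{walk} (which a priori only gives, for each element of $C_1$, \emph{some} bisimilar reachable state in $\TSS[2]$) into a genuine \emph{cycle} in $\TSS[2]$, so that all those states land in a single strongly connected component. This is where regularity of $\TSS[2]$ is essential — without finiteness of $\Reach{s_2}$ the iterated images need never close up. The care needed is in the bookkeeping of the iteration: one must iterate the transfer of the closed walk, observe that infinitely many states along the resulting forward-infinite path are bisimilar to $t_1$ (indeed the construction is "periodic" up to bisimilarity), invoke the pigeonhole principle on $\Reach{s_2}$ to get an actual repetition, and then note that all intermediate states between the two occurrences — which include images of every $t_i$ — lie on a cycle and hence in one component.
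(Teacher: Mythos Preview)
The paper, being an extended abstract, states this lemma without proof, so there is nothing to compare against directly. Your argument is correct in substance, and your final paragraph gives the clean version: iterate the bisimulation-transfer of a closed walk through $C_1$ to obtain a forward-infinite path inside $\Reach{s_2}$; by regularity invoke pigeonhole to find two equal images of $t_1$, say $u_1^{(m)}=u_1^{(n)}$; then the segment of the already-constructed path from $u_1^{(m)}$ to $u_1^{(n)}$ is a cycle, and it contains $u_i^{(m)}\bisim t_i$ for every $i$, so the strongly connected component of $v=u_1^{(m)}$ is the required $C_2$.

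Two points of exposition. First, ``$s_1\bisim u_1$'' in the second paragraph should read $t_1\bisim u_1$. Second, your verification paragraph takes an unnecessary detour: you worry that the image of $t_i$ obtained by \emph{re}-transferring from $v$ might sit in a different component, and propose to iterate again and appeal to finiteness. That detour is indeed shaky --- a fresh iteration from $v$ may close up only at some later $v^{(p)}=v^{(q)}$, whose component could lie strictly beyond $C_2$ in the condensation DAG, and nothing then forces it to contain $v$ --- but it is also unnecessary. The already-constructed path closes up at $v$ after exactly $n-m$ rounds, and the states $u_i^{(m)}$ lying on that cycle are precisely what you need. Your final paragraph already says this; let it replace the detour and the proof is clean.
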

\section[Relative Expressiveness of BPA*01 and PA*01]{Relative Expressiveness of $\BPAszo$ and $\PAszo$}\label{sec:bpa-pa}

In~\cite{BBP94} it is proved that $\BPAsz$ is less expressive than
$\PAsz$. The proof in~\cite{BBP94} is by arguing that the $\PAsz$
expression
  $(a \seqc b)\bks c \parc d$
is not bisimilar with a \BPAsz{} expression.
(Actually, the \PAsz{} expression employed in~\cite{BFP01} uses only a
single action $a$, i.e., considers the \PAsz{} expression
  $(a \seqc a)\bks a \parc a$;
we use the actions $b$, $c$ and $d$ for clarity.)
An alternative and more general proof that the \PAsz{} expression
above is not expressible in \BPAsz{} is presented in
\cite{BFP01}. There it is established that the \PAsz{} expression above
fails the following general property, which is satisfied by all
\BPAsz{}-expressible automata:

\begin{quotation}\noindent
  If $C$ is a cycle in an automaton associated with a
  $\BPAsz$ expression, then there is at most one state $p \in C$
  that has an exit transition.
\end{quotation}
(A cycle is a sequence $(p_1,\dots,p_n)$ such that $p_i\step{}p_{i+1}$
($1\leq i <n$) and $p_n\step{}p_1$; an exit transition from $p_i$ is a
transition $p_i\step{}p_i'$ such that no element of the cycle is
reachable from $p_i'$.)


The following example shows that automata associated with
$\BPAszo$ expressions do not satisfy the property above.

\begin{example}
  Consider the automaton associated with the \BPAszo{} expression
    $\emp\seqc(a\seqc (a\altc \emp))\uks \seqc b$ (see Figure~\ref{fig:ex-cycle})
  with a cycle; both states on the cycle have a $b$-transition off the
  cycle.

  \begin{figure}[htb]
   \begin{center}
   \begin{transsys}(60,15)(0,2)
    \small
    \graphset{fangle=180}
    \node[Nmarks=i](p0)(0,12){$\emp\seqc(a\seqc (a \altc \emp))\uks \seqc b$}
    \node(p1)(60,12){$(a \altc \emp) \seqc (a \seqc (a \altc \emp))\uks \seqc b$}
    \node[Nmarks=f](p2)(0,0){$\emp$}
    \edge[curvedepth=3,sxo=10,exo=-12](p0,p1){$a$}
    \edge(p0,p2){$b$}
    \edge*[loopangle=-45,loopdiam=5](p1){$a$}
    \edge[curvedepth=3,sxo=-15,exo=10](p1,p0){$a$}
    \edge[curvedepth=4](p1,p2){$b$}
   \end{transsys}
   \end{center}
   \caption{A transition system in $\BPAszo$ with a cycle with multiple exit
     transitions.}\label{fig:ex-cycle}
  \end{figure}
\end{example}

In this section we shall establish that \BPAszo{} is less expressive
than \PAszo{}. As in~\cite{BFP01} we prove that \BPAszo{}-expressible
automata satisfy a general property that some
automaton expressible in \PAszo{} fails to
satisfy. We find it technically convenient, however, to base our
relative expressiveness proofs on the notion of strongly connected
component, instead of cycle. Note, e.g., that every process expression
is an element of precisely one strongly connected component, while it
may reside in more than one cycle. Furthermore, if $p \step{} q$ and
$p$ and $q$ are in distinct strongly connected components, then we can
be sure that $p\step{} q$ is an exit transition, while if $p$ and $q$
are on distinct cycles, then it may happen that $p$ is reachable from
$q$.


\subsection[Strongly Connected Components in BPA*01]{Strongly Connected Components in \BPAszo{}}

We shall now establish that a non-trivial strongly connected component
in \BPAszo{} is either of the form
  $\{p_1\seqc q\uks,\dots,p_n\seqc q\uks\}$
with $p_i$ ($0 \leq i \leq n$) reachable from $q$ and
  $\{p_1,\dots,p_n\}$ not a strongly connected component, or of the form
  $\{p_1\seqc q,\dots,p_n\seqc q\}$
where $\{p_1,\dots,p_n\}$ is a strongly connected component.  To this
end, let us first establish, by reasoning on the basis of the
operational semantics, that process expressions in a non-trivial
strongly connected component are necessarily sequential
compositions; at the heart of the argument will be the following measure
on process expressions.



\begin{definition}\label{def:counter-BPA}
  Let $p$ a $\BPAszo$ expression;
  then $\OC{p}$ is defined with recursion on the structure of $p$ by the
  following clauses:
  \begin{enumerate}
  \renewcommand{\theenumi}{\roman{enumi}}
  \renewcommand{\labelenumi}{(\theenumi)}
   \item\label{item:OCconst}
     $\OC{\dl}= \OC{\emp} = 0$, and $\OC{a} = 1$;
   \item\label{item:OCseqc}
     $\OC{p \seqc q} = 0$ if $q$ is a star expression,
       and $\OC{p \seqc q} = \OC{q} + 1$ otherwise;
   \item\label{item:OCaltc}
     $\OC{p + q} = \max{\OC{p},\OC{q}} + 1$;\ \text{and}
   \item\label{item:OCstar}
     $\OC{p\uks} = 1$.
  \end{enumerate}
\end{definition}

We establish that $\OC{\_}$ is non-increasing over transitions, and, in
fact, in most cases decreases.
\begin{lemma}\label{lem:OC-monotonic}
  If $p$ and $p'$ are $\BPAszo$ expressions such that
  $p \steptc{} p'$, then $\OC{p} \geq \OC{p'}$. Moreover, if
  $\OC{p}=\OC{p'}$, then $p=p_1\seqc q$ and $p'=p_1'\seqc q$ for some
  $p_1$, $p_1'$ and $q$.
\end{lemma}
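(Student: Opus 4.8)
The plan is to prove both statements by induction on the length of the transition sequence $p \steptc{} p'$, which reduces everything to the case of a single transition $p \step{a} p'$; the general case then follows by transitivity of $\geq$ together with a short argument that the ``$p = p_1\seqc q$'' shape is preserved. So the core of the work is: \emph{for a single step $p \step{a} p'$, show $\OC{p} \geq \OC{p'}$, and if equality holds then $p$ and $p'$ are sequential compositions $p_1\seqc q$ and $p_1'\seqc q$ with a common right-hand factor $q$.}

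First I would carry out the single-step case by induction on the structure of $p$, following the operational rules of Table~\ref{tab:ACPszoSOS} restricted to \BPAszo{} (rules \ref{osrule:empterm}--\ref{osrule:starterm}). The constants $\dl$ and $\emp$ admit no transitions, and $a \step{a} \emp$ gives $\OC{a} = 1 > 0 = \OC{\emp}$, so equality never occurs there. For $p \altc q$: a transition comes from $p \step{a} p'$ (or symmetrically from $q$), and by the induction hypothesis $\OC{p} \geq \OC{p'}$, so $\OC{p \altc q} = \max{\OC{p},\OC{q}} + 1 > \OC{p} \geq \OC{p'}$; the strict inequality here means equality is impossible, which is exactly what we want since the conclusion's shape constraint need not be met. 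The star case $p\uks \step{a} p'\seqc p\uks$ gives $\OC{p\uks} = 1$ and, by clause~\ref{item:OCseqc}, $\OC{p'\seqc p\uks} = 0$ because $p\uks$ is a star expression, so again $1 > 0$ and equality does not occur. The interesting case is sequential composition $p = p_1 \seqc q$, which has two subcases. If $p_1 \step{a} p_1'$ (rule~\ref{osrule:seqcstepl}), then $p' = p_1'\seqc q$; if $q$ is a star expression then $\OC{p} = \OC{p'} = 0$ and both sides have the form ``$(\cdot)\seqc q$'' with the same $q$, so equality holds \emph{with} the required shape; if $q$ is not a star expression, then $\OC{p} = \OC{q}+1 = \OC{p'}$, again with the common factor $q$ — so this subcase is precisely where equality with the right structure arises. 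If instead $\term{p_1}$ and $q \step{a} q'$ (rule~\ref{osrule:seqcstepr}), then $p' = q'$; if $q$ is a star expression, $q = q_0\uks$, then $\OC{p} = 0$ while $q'$ has the form $q_0'\seqc q_0\uks$ (the only transitions of a star expression), so $\OC{p'} = 0 = \OC{p}$ but now $p = p_1\seqc q$ \emph{is} of the required form and $p' = q_0'\seqc q_0\uks$ is too, with common factor $q_0\uks$ — wait, the factors differ ($q$ versus $q_0\uks$), so I must be careful: here $q = q_0\uks$ actually equals the right factor of $p'$, so the common factor is $q_0\uks$ and the statement holds; if $q$ is not a star expression, then $\OC{p} = \OC{q}+1$ whereas, by the induction hypothesis applied to $q \step{a} q'$, $\OC{q'} \leq \OC{q} < \OC{q}+1 = \OC{p}$, so the inequality is strict and equality does not occur.

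The main obstacle I anticipate is bookkeeping the ``common right factor $q$'' through the multi-step induction: from $p \steptc{} p'$ one splits as $p \step{a} p'' \steptc{} p'$, and if both $\OC{p}=\OC{p'}$ then all intermediate values are equal, so by the single-step result $p = p_1\seqc q$ and $p'' = p_1'\seqc q$; then one must apply the induction hypothesis to $p'' \steptc{} p'$ to conclude $p'' = p_1'\seqc q$ and $p' = p_1''\seqc q'$ with a \emph{possibly different} factor $q'$, and one needs to argue $q = q'$. The cleanest way is to strengthen the single-step claim to record that when $\OC{p}=\OC{p'}$ the right factor is \emph{literally unchanged} by that step, except when $p_1$ terminates and $q$ is itself a star expression $q_0\uks$, in which case $p'$'s right factor is again $q_0\uks$ — so the set of possible right factors reachable with $\OC{}$ constant is preserved, and a short induction closes the gap. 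I would therefore state the lemma's equality clause slightly more precisely in the proof (tracking that $q$ is the same star expression or is carried verbatim) so that the multi-step composition goes through without friction; alternatively, observe that once $\OC{p}=0$ along the whole chain, the right factor is always a star expression and transitions only ever replace $p_1$, never $q$, which handles the $\OC{}=0$ regime, while the $\OC{}=k>0$ regime forces $q$ to be a non-star expression that is never touched by any step preserving the counter, again fixing $q$ throughout.
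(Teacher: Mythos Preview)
Your proposal is correct and follows exactly the paper's two-stage approach: establish the single-step case by induction on derivations (your structural case analysis on $p$ according to the operational rules), then lift to $\steptc{}$ by induction on the length of the transition sequence. Your bookkeeping worry in the multi-step case dissolves once you note that process expressions are abstract syntax trees, so the decomposition $p''=p_1'\seqc q$ is \emph{unique}; hence the right factor produced by the single-step result and the one produced by the induction hypothesis on $p''\steptc{} p'$ must literally coincide, and the paper is justified in calling this step ``straightforward.''
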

\begin{proof}
  First, the special case of the lemma in which $p\step{}p'$ is
  established with induction on derivations according to the operational
  rules for \BPAszo{}. Then, the general case of the lemma follows from
  the special case with a straightforward induction on the length of a
  transition sequence from $p$ to $p'$.
\end{proof}

Let $P$ be a set of process expressions, and let $q$ be a process
expression; by $P\seqc q$ we denote the set of process expressions
$P\seqc q=\{p\seqc q\mid p\in P\}$.

\begin{lemma}\label{lem:sccseqc}
  If $C$ is a non-trivial strongly connected component in \BPAszo{},
  then there exist a set of process expressions $C'$ and a process
  expression $q$ such that $C=C'\seqc q$.
\end{lemma}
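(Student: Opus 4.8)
The plan is to exploit the structural measure $\OC{\_}$ from Definition~\ref{def:counter-BPA} together with Lemma~\ref{lem:OC-monotonic}. Since $C$ is non-trivial, any two elements $p,p'\in C$ satisfy $p\steptc{}p'$ and $p'\steptc{}p$, so by Lemma~\ref{lem:OC-monotonic} we get $\OC{p}\geq\OC{p'}$ and $\OC{p'}\geq\OC{p}$, hence $\OC{p}=\OC{p'}$. Thus $\OC{\_}$ is constant on $C$; call this common value $k$. Applying the ``moreover'' part of Lemma~\ref{lem:OC-monotonic} to a transition sequence from $p$ to any element of $C$, we learn that every element of $C$ has the form $p_1\seqc q$, i.e., is a sequential composition. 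What remains is to show that the \emph{right-hand factor} $q$ can be chosen to be the \emph{same} expression for all elements of $C$.

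First I would make the single-step statement precise: if $p=p_1\seqc q\step{a}p'$ with $\OC{p}=\OC{p'}$, I expect (by inspecting the operational rules for sequential composition, cases~\ref{osrule:seqcstepl} and~\ref{osrule:seqcstepr}) that either the step comes from $p_1\step{a}p_1'$, giving $p'=p_1'\seqc q$ with the same $q$; or the step comes from $\term{p_1}$ and $q\step{a}q'$, giving $p'=q'$, which is only consistent with $\OC{p}=\OC{p'}$ when $q$ is a star expression $r\uks$ and then $p'=q'=r\seqc r\uks=r\seqc q$, again with the same $q$ (here I use clause~\ref{item:OCseqc}: $\OC{p_1\seqc r\uks}=0=\OC{r\seqc r\uks}$). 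Either way the right factor is unchanged along any $\OC{\_}$-preserving step. Actually I should be slightly careful: the second case could also in principle produce $p'$ equal to $q'$ where $q'$ is itself not a sequential composition, but then $\OC{q'}<\OC{p}$ unless $q$ was a star expression, so the constancy of $\OC{\_}$ on $C$ rules this out. So along any transition sequence internal to $C$ the right factor is constant.

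From this the lemma follows quickly. Fix $p_0=p_1\seqc q\in C$. For any $p'\in C$ there is a transition sequence $p_0\steptc{}p'$ all of whose intermediate states also lie in $C$ (this is the defining property of a strongly connected component, using that $C$ is an equivalence-type maximal set). Wait --- I need intermediate states to be in $C$; this is true because $C$ is a maximal strongly connected subset and each intermediate state $s$ satisfies $p_0\steprtc{}s$ and $s\steprtc{}p'\steprtc{}p_0$, so $s$ is strongly connected to $p_0$ and hence $s\in C$ by maximality. Therefore every step in the sequence is $\OC{\_}$-preserving, so by the previous paragraph $p'=p_1'\seqc q$ for some $p_1'$, with the \emph{same} $q$ as in $p_0$. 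Setting $C'=\{p_1'\mid p_1'\seqc q\in C\}$ we obtain $C=C'\seqc q$, as required.

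The main obstacle I anticipate is the careful case analysis of the single-step claim, specifically pinning down exactly when an $\OC{\_}$-preserving transition of a sequential composition forces the right factor to be preserved --- in particular handling the ``$\term{p_1}$ and $q$ moves'' case and showing it can only occur (consistently with $\OC{\_}$-constancy on a non-trivial SCC) when $q$ is a star expression, in which case the first unfolding step $r\uks\step{a}r\seqc r\uks$ luckily reproduces $q$ as the right factor. Everything else is routine bookkeeping with the operational rules and the definition of $\OC{\_}$.
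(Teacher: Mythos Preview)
Your proposal is correct and follows essentially the paper's intended approach: use Lemma~\ref{lem:OC-monotonic} to see that $\OC{\_}$ is constant on $C$, then use its ``moreover'' clause to conclude that all elements of $C$ share a common right factor.

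Two small remarks. First, your detailed single-step case analysis is in fact unnecessary: the ``moreover'' part of Lemma~\ref{lem:OC-monotonic} already asserts that when $p\steptc{}p'$ with $\OC{p}=\OC{p'}$ both $p$ and $p'$ have the form $p_1\seqc q$ and $p_1'\seqc q$ \emph{with the same $q$}. Applying this to $p_0\steptc{}p'$ for each $p'\in C$, and noting that the top-level decomposition $p_0=p_1\seqc q$ is syntactically unique (so the $q$ obtained is independent of $p'$), you are done immediately. Your step-by-step argument is essentially re-deriving the induction behind Lemma~\ref{lem:OC-monotonic}, which is fine but not needed. Second, a tiny slip: in the case where $q=r\uks$ moves, rule~\ref{osrule:starstep} gives $r\uks\step{a}r'\seqc r\uks$ with $r\step{a}r'$, so $q'=r'\seqc q$, not $r\seqc q$; your conclusion that the right factor is still $q$ is unaffected.
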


We proceed to give an inductive description of the non-trivial strongly
connected components in \BPAszo{}. The basis for the inductive
description is the following notion of basic strongly connected
component.

\begin{definition}\label{def:basic}
  A non-trivial strongly connected component $C=\{p_1,\dots,p_n\}$ in
  \BPAszo{} is \emph{basic} if there exist \BPAszo{} expressions
  $p_1',\dots,p_n'$ and a \BPAszo{} expression $q$ such that
    $p_i = p'_i \seqc q \uks$ $(1 \leq i \leq n)$
  and
    $\{p'_1, \dots , p'_n\}$
  is not a strongly connected component in \BPAszo{}.
\end{definition}

\begin{proposition}\label{prop:sccinduction}
  Let $C$ be a non-trivial strongly connected component in \BPAszo{}.
  Then either $C$ is basic, or there exist a non-trivial strongly
  connected component $C'$ and a \BPAszo{} expression $q$ such that
  $C=C'\seqc q$.
\end{proposition}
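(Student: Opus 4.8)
The plan is to analyse the structure of a non-trivial strongly connected component $C$ in \BPAszo{} using the results already established. By \Lemma{sccseqc}, we know $C = C' \seqc q$ for some set of expressions $C'$ and some expression $q$. So write $C = \{p_1 \seqc q, \dots, p_n \seqc q\}$. The key case distinction is on the shape of $q$: either $q$ is a star expression, say $q = r\uks$, or it is not.

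First I would consider the case where $q = r\uks$ is a star expression. The claim is then that $C$ is basic. To see this, take $p_i' = p_i$, so that each element of $C$ has the form $p_i' \seqc r\uks$. It remains to verify that $\{p_1, \dots, p_n\}$ is not a strongly connected component. This is where I would work on the basis of the operational semantics: a transition from $p_i \seqc r\uks$ either comes from a transition of $p_i$ (rule~\ref{osrule:seqcstepl}), giving $p_i \step{a} p_i'$ and hence $p_i' \seqc r\uks \in C$, or it comes from $\term{p_i}$ together with a transition of $r\uks$ (rule~\ref{osrule:seqcstepr}), which by rule~\ref{osrule:starstep} lands in $r' \seqc r\uks$ for some $r \step{a} r'$. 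In the first case $p_i$ steps inside $\{p_1,\dots,p_n\}$; in the second it steps outside (to something of the form $r'\seqc r\uks$, which is not of the form $p_j \seqc q$ unless it happens to coincide, but even then the point is that the step from $p_i$ itself exits). I would use \Lemma{OC-monotonic} here to control which transitions preserve membership in $C$: since all elements of $C$ have the same right factor $q = r\uks$, and $\OC{p_i \seqc r\uks} = 0$ for all $i$, cycling through $C$ forces the intermediate expressions to all have the form $(\cdot)\seqc r\uks$; stripping off the common $\seqc r\uks$ then shows that $\{p_1,\dots,p_n\}$ cannot itself be strongly connected, because a cycle among the $p_i$ in isolation would have to pass only through states reachable without ever invoking the $\term{p_i}$ clause, and one shows this is impossible — essentially because $r\uks$ itself is always terminating, so the "exit via termination" route is genuinely available and the $p_i$-only system has strictly fewer transitions, breaking the cycle. (The precise argument is that if $\{p_1,\dots,p_n\}$ were a strongly connected component, then $C$ would not be maximal, or the $\seqc q$ structure would be redundant — I would pin this down by comparing reachability in the two systems.)

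Next I would handle the case where $q$ is not a star expression. Here the claim is that $C = C' \seqc q$ with $C'$ itself a non-trivial strongly connected component. The crucial observation is that when $q$ is not a star expression, a transition $p_i \seqc q \step{a} s$ must arise via rule~\ref{osrule:seqcstepl} — i.e.\ from $p_i \step{a} p_i'$ with $s = p_i' \seqc q$ — as long as $p_i$ is not terminating; and if $p_i$ were terminating, then by rule~\ref{osrule:seqcstepr} there would be a transition out of $p_i \seqc q$ inherited from $q$, landing outside the set $\{p_1 \seqc q, \dots, p_n\seqc q\}$ (since $q$ is not a star expression and $\OC{}$ strictly drops), contradicting strong connectedness of $C$ unless that transition happens to re-enter $C$. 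I would again invoke \Lemma{OC-monotonic}: all elements of $C$ share the common value $\OC{p_i \seqc q} = \OC{q}+1$, and any transition sequence returning to $C$ keeps $\OC{}$ constant, forcing every intermediate expression to have the form $(\cdot) \seqc q$ with the same $q$. Hence the first factors form a set $C'$ closed under the relevant transitions and, by the back-and-forth reachability inherited from $C$, strongly connected; non-triviality of $C'$ follows from non-triviality of $C$ (a self-loop or longer cycle on $p_i \seqc q$ within $C$ projects to one on $p_i$ within $C'$).

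The main obstacle I anticipate is the bookkeeping in the star case: showing rigorously that $\{p_1',\dots,p_n'\}$ fails to be a strongly connected component. The subtlety is that the $p_i'$ might individually be cyclic as expressions, so I cannot simply say "they have no cycles"; rather, I must argue that the particular set $\{p_1',\dots,p_n'\}$ extracted from $C$ is not maximal, or is not even strongly connected, in the transition system space — the transitions available to $p_i' \seqc r\uks$ strictly include a termination-triggered exit that is not available to $p_i'$ on its own, so the closure of $\{p_1',\dots,p_n'\}$ under transitions inside \BPAszo{} necessarily reaches states outside the set or fails to come back, meaning it is a proper subset of its strongly connected component. Getting this argument clean — distinguishing "$\{p_i'\}$ is strictly contained in some larger SCC" from "$\{p_i'\}$ splits into several SCCs" — is the delicate point, and I would handle it by a careful induction on the length of transition sequences together with \Lemma{OC-monotonic} applied at the level of the $p_i'$ (noting $\OC{}$ behaves coherently with the removal of $\seqc r\uks$).
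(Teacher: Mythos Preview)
Your case distinction is in the wrong direction, and this produces a genuine error in the star case. You claim that whenever the common right factor $q$ is a star expression, the set $C'=\{p_1,\dots,p_n\}$ of left factors fails to be a strongly connected component, so that $C$ is basic. This is false. Take $C=\{(\emp\seqc a\uks)\seqc b\uks\}$: it is a non-trivial strongly connected component (there is an $a$-labelled self-loop via rule~\ref{osrule:seqcstepl}), and the only decomposition afforded by \Lemma{sccseqc} has $C'=\{\emp\seqc a\uks\}$ and $q=b\uks$. Here $q$ is a star expression, yet $C'$ \emph{is} a non-trivial strongly connected component (again an $a$-labelled self-loop, and nothing else is reachable). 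So $C$ is not basic---the proposition holds instead via the second alternative, $C=C'\seqc q$ with $C'$ a non-trivial strongly connected component. The difficulty you flag at the end of your proposal is therefore not a matter of bookkeeping: the statement you set out to prove in that case is simply not true, and no amount of induction will establish it.

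The paper avoids this by reversing the case split. It does not case on the shape of $q$; instead it asks whether $C'$ is a non-trivial strongly connected component. If so, the second alternative holds immediately. If not, one exhibits $p,p'\in C'$ with $p\notmstep{}p'$; since nonetheless $p\seqc q\steptc{}p'\seqc q$ inside $C$, an induction on the length of this transition sequence shows that somewhere along the way control must have passed through $q$, i.e.\ $q\steptc{}p'\seqc q$. Then \Lemma{OC-monotonic} gives $\OC{q}\geq\OC{p'\seqc q}$, and by the definition of $\OC{\_}$ this forces $q$ to be a star expression---whence $C$ is basic, witnessed by the very set $C'$ that was assumed not to be a strongly connected component. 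Your argument for the non-star case is on the right track and essentially recovers part of this reasoning, but once the case split is reversed that case becomes redundant.
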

 \begin{proof}
  By Lemma~\ref{lem:sccseqc} there exists a set of states $C'$ and a
  \BPAszo{} expression $q$ such that $C = C' \seqc q$.
  If $C'$ is a non-trivial strongly connected component, then the
  proposition follows, so it remains to prove that if $C'$ is not a
  non-trivial strongly connected component, then $C$ is basic.
  Note that if $C'$ is not a strongly connected component, then there
  are $p, p' \in C'$ such that $p \notmstep{} p'$.
  Since $C$ is a non-trivial strongly connected component and
  $C = C' \seqc q$,  it holds that $p \seqc q \step{}^{+} p' \seqc q$.
  Using that $p\notmstep{} p'$, it can be established with
  induction on the length of the transition sequence from $p \seqc q$
  to $p' \seqc q$ that $q\step{}^{+} p'\seqc q$.
  It follows by \Lemma{OC-monotonic} that
    $\OC{q}\geq\OC{p'\seqc q}$,
  and therefore, according to the definition of $\OC{\_}$, $q$ must be
  a star expression. We conclude that $C$ is basic.
\end{proof}%

\subsection[BPA*01 is less expressive than PA*01]{$\BPAszo \lexp \PAszo$}

The crucial tool that will allow us to establish that \BPAszo{} is
less expressive than \PAszo{} will be a special property of states
with a transition out of their strongly connected component in
\BPAszo{}. Roughly, if $C$ is a strongly connected component in
\BPAszo{}, then all states with a transition out of $C$ have the same
transitions out of $C$.

\begin{definition}\label{def:sccexits}
  Let $C$ be a strongly connected component in the transition system
  space $\TSS{}=(S,\stepsym,\termsym)$ and let $s \in C$.
  An exit transition from $s$ is a pair $(a,s')$ such that
  $s\step{a} s'$ and $s' \not \in C$.
  We denote by $\Ext{s}$ the set of all \emph{exit transitions} from $s$,
  i.e.,
  $\Ext{s}=\{(a,s')\mid s \step{a} s' \land s' \not \in C \}$.
  An element $s \in C$ is called an \emph{exit state} if $\term{s}$
  or there exists an exit transition from $s$.
\end{definition}

\begin{example}
  Consider the automaton associated with the \BPAszo{}
  expression
    $\emp \seqc (a \seqc b \seqc (c \altc \emp))\uks \seqc d$,
  (see Figure~\ref{fig:ex-SCCs}).
  It has a strongly connecting component with two exit states,
  both with one exit transition $(d, \emp)$.

  \begin{figure}[htb]
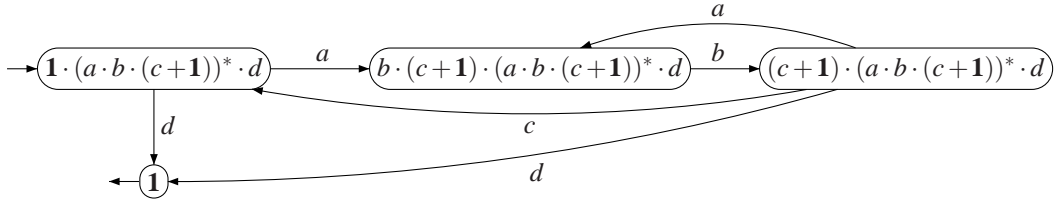

    \begin{center}
    \begin{transsys}(100,22)(0,5)
     \small
     \node[Nmarks=i](p1)(0,20){$\emp \seqc (a \seqc b \seqc (c \altc \emp))\uks \seqc d$}
     \node(p2)(50,20){$b \seqc (c \altc \emp) \seqc (a \seqc b \seqc (c \altc \emp))\uks \seqc d$}
     \node(p3)(100,20){$(c \altc \emp) \seqc (a \seqc b \seqc (c \altc \emp))\uks \seqc d$}
     \node[Nmarks=f,fangle=-180](p4)(0,5){$\emp$}
     \edge[ELpos=45](p1,p2){$a$}
     \edge(p2,p3){$b$}
     \edge[curvedepth=6](p3,p1){$c$}
     \edge[curvedepth=-6,ELside=r](p3,p2){$a$}
     \edge(p1,p4){$d$}
     \edge[curvedepth=4](p3,p4){$d$}
    \end{transsys}
    \end{center}
    \caption{A non-trivial strongly connected component in \BPAszo{} with multiple exit transitions.}\label{fig:ex-SCCs}
  \end{figure}
\end{example}

Non-trivial strongly connected components in \BPAszo{} arise from
executing the argument of a Kleene star. An exit state of a strongly
connected component in \BPAszo{} is then a state in which the
execution has the option to terminate. Due to the presence of $\dl$ in
\BPAszo{} this is, however, not the only type of exit state in
\BPAszo{} strongly connected components.

\begin{example}
  Consider the automaton associated with the \BPAszo{}
  expression
    $\emp \seqc (a \seqc ((b \seqc \dl) \altc \emp))\uks \seqc c$
  (see Figure~\ref{fig:ex-SCC2}).
  The strongly connected component contains two exit states and two
  (distinct) exit transitions.  One of these exit transitions leads to
  a deadlocked state.

  \begin{figure}[htb]
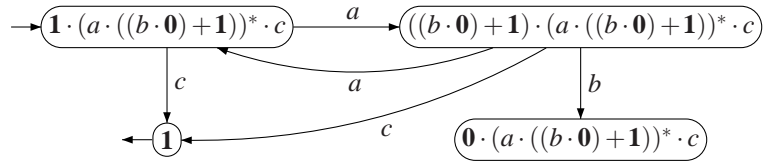

    \begin{center}
    \begin{transsys}(55,22)(0,5)
     \small
     \node[Nmarks=i](p1)(0,20){$\emp \seqc (a \seqc ((b \seqc \dl) \altc \emp))\uks \seqc c$}
     \node(p2)(55,20){$((b \seqc \dl) \altc \emp) \seqc (a \seqc ((b \seqc \dl) \altc \emp))\uks \seqc c$}
     \node(p3)(55,5){$\dl\seqc (a \seqc ((b \seqc \dl) \altc \emp))\uks \seqc c$}
     \node[Nmarks=f,fangle=-180](p4)(0,5){$\emp$}
     \edge[ELpos=45](p1,p2){$a$}
     \edge(p2,p3){$b$}
     \edge[curvedepth=4.5](p2,p4){$c$}
     \edge[curvedepth=6,sxo=-5](p2,p1){$a$}
     \edge(p1,p4){$c$}
    \end{transsys}    \end{center}
    \caption{A strongly connected component with normed exit transitions.}\label{fig:ex-SCC2}
  \end{figure}
\end{example}

The preceding example illustrates that the special property of
strongly connected components in \BPAszo{} that we are after, should
exclude from consideration any exit transition arising from an
occurrence of $\dl$. This is achieved in the following definitions.

\begin{definition}\label{def:live}
  Let $C$ be a strongly connected component and let $s \in C$.
  An exit transition $(a,s')$ from $s$ is \emph{normed} if $s'$ is normed.
  We denote by $\Extn{s}$ the set of normed exit transitions from $s$.

  An exit state $s\in C$ is \emph{alive} if $\term{s}$ or there exists
  a normed exit transition from $s$.
\end{definition}

\begin{lemma}\label{lem:starsteps}
  If $p\seqc q\uks\steprtc{}r$, then either
    there exists $p'$ such that $p\steprtc{}p'$ and $r=p'\seqc q\uks$
  or
    there exist $p'$ and $q'$ such that
      $p\steprtc{}p'$, $\term{p'}$,
      $q\steprtc{}q'$, and
      $r=q'\seqc q\uks$.
\end{lemma}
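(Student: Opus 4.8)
The plan is to prove the statement by induction on the length $n$ of the transition sequence $p\seqc q\uks\steprtc{}r$. Since the two shapes occurring in the conclusion, $p'\seqc q\uks$ and $q'\seqc q\uks$, are exactly the kinds of expression that iterating a star produces, this is really a claim about the syntactic form of the states reachable from $p\seqc q\uks$, and the proof is essentially a bookkeeping argument over the operational rules of Table~\ref{tab:ACPszoSOS} governing $\seqc$ and $\uks$.

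In the base case $n=0$ we have $r=p\seqc q\uks$, and the first disjunct holds with $p'=p$, using that $\steprtc{}$ is reflexive.

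For the inductive step I would write the sequence as $p\seqc q\uks\steprtc{}\bar r\step{a}r$, apply the induction hypothesis to $\bar r$, and then analyse the last transition $\bar r\step{a}r$. By the hypothesis $\bar r$ has one of the two prescribed shapes, so in both cases $\bar r = u\seqc q\uks$ for some expression $u$; the transition $u\seqc q\uks\step{a}r$ is then derived either by rule~\ref{osrule:seqcstepl}, so that $u\step{a}u'$ and $r=u'\seqc q\uks$ for some $u'$, or by rule~\ref{osrule:seqcstepr}, so that $\term{u}$ and $q\uks\step{a}r$, in which case rule~\ref{osrule:starstep} forces $q\step{a}q'$ and $r=q'\seqc q\uks$ for some $q'$. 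Tracking how each possibility updates the witnesses yields exactly the disjunction in the conclusion: a step via rule~\ref{osrule:seqcstepl} extends the relevant reachability chain by one step and stays in the same disjunct; a step via rule~\ref{osrule:seqcstepr} out of the first disjunct moves us to the second, with the required $\term{p'}$ supplied by the left premise of that rule; and a step via rule~\ref{osrule:seqcstepr} out of the second disjunct merely restarts the body of the star and remains in the second disjunct.

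The step I expect to require the most care — though it is still routine — is this transition analysis for $u\seqc q\uks\step{a}r$: one must check that the only applicable rules are rule~\ref{osrule:seqcstepl} and rule~\ref{osrule:seqcstepr}, and that when rule~\ref{osrule:seqcstepr} applies the ensuing transition of $q\uks$ can only be derived by rule~\ref{osrule:starstep}, so that in particular the reached term can never discard its trailing $\seqc q\uks$. This invariance of shape is the crux that makes the induction close; everything else is routine manipulation of reflexive–transitive closures.
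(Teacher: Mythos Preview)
Your proof is correct and is exactly the standard argument one would expect. The paper, being an extended abstract, states this lemma without proof, so there is no alternative approach to compare against; your induction on the length of the transition sequence, together with the case analysis on rules~\ref{osrule:seqcstepl}, \ref{osrule:seqcstepr}, and~\ref{osrule:starstep}, is the natural route and goes through cleanly.
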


\begin{lemma}\label{lem:bsccnoextl}
  If $C$ is a basic strongly connected component,
  then $\Extn{p}=\emptyset$ for all $p\in C$.
\end{lemma}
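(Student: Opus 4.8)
The plan is to combine the structural description of a basic strongly connected component with Lemma~\ref{lem:starsteps}, which enumerates exactly the states reachable from an expression of the form $p \seqc q\uks$. By Definition~\ref{def:basic}, if $C = \{p_1,\dots,p_n\}$ is basic, then $p_i = p_i' \seqc q\uks$ for suitable $p_i'$ and $q$, with $\{p_1',\dots,p_n'\}$ not a strongly connected component. Fix some $p = p_i \in C$ and suppose toward a contradiction that $(a,r) \in \Extn{p}$, i.e., $p \step{a} r$, $r \notin C$, and $r$ is normed. I would first observe that $r$ is reachable from $p = p_i' \seqc q\uks$ in one step, so Lemma~\ref{lem:starsteps} applies and gives two cases for the shape of $r$.

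In the first case, $r = p' \seqc q\uks$ for some $p'$ with $p_i' \steprtc{} p'$. I claim $r \in C$, contradicting $r \notin C$. The point is that $r$ is reachable from $p_i \in C$, and I must show $p_i$ is reachable back from $r$; since $C$ is the \emph{maximal} such set, this forces $r \in C$. Reachability from $r = p' \seqc q\uks$ back into $C$ follows because the Kleene-star construct always allows the execution to return to the start of $q\uks$: after any transition out of $p' \seqc q\uks$ one can, via rule~\ref{osrule:starstep} applied to a transition of $q$, reach a state of the form $q' \seqc q\uks$ and eventually $q'' \seqc q\uks$ with $\term{q''}$, hence back to $q\uks$-shaped states — and more to the point, $p_i$ itself lies on such a returning path because $p_i \steptc{} p_i$ (as $C$ is non-trivial) and every state on that cycle is of the form $(\cdot)\seqc q\uks$. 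I would make this precise by noting that $C \subseteq (\cdot) \seqc q\uks$ and that reachability within this family is governed by reachability of the left components together with the ever-present option to loop through $q$; since $p_i \steptc{} p_i$, the state $r = p'\seqc q\uks$ with $p_i' \steprtc{} p'$ is mutually reachable with $p_i$, so $r \in C$.

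In the second case, $r = q' \seqc q\uks$ for some $q'$ with $q \steprtc{} q'$. The same argument as above shows $r \in C$: from $q'\seqc q\uks$ one can complete the execution of $q'$ to reach $q''\seqc q\uks$ with $\term{q''}$ (if $q'$ is normed — but even if not, the relevant fact is only that $r$ is reachable from $p_i$ and that $p_i$, lying on a cycle of $q\uks$-shaped states, is reachable from $r$), again contradicting $r \notin C$. Wait — I must be careful: the second branch of Lemma~\ref{lem:starsteps} already required $\term{p'}$ for some $p'$ with $p_i' \steprtc{} p'$, so in this case the cycle through $C$ genuinely passes through $q\uks$ itself, and $r = q'\seqc q\uks$ with $q\steprtc{}q'$ is mutually reachable with $q\uks$ and hence with every element of $C$. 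Either way, both cases contradict $r \notin C$, so no normed exit transition from $p$ exists; since $\term{p_i' \seqc q\uks}$ never holds unless $p_i'$ itself can terminate, and the ``alive'' case via termination is handled identically, we get $\Extn{p} = \emptyset$.

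The main obstacle I anticipate is the reachability bookkeeping in case one: showing rigorously that every state of the form $p' \seqc q\uks$ with $p_i' \steprtc{} p'$ is mutually reachable with $p_i$, and therefore cannot be outside $C$. This requires knowing that the cycle witnessing non-triviality of $C$ stays entirely within the family $(\cdot)\seqc q\uks$ and that the left components $p_i'$ are themselves mutually reachable along it — which is essentially where the hypothesis ``$\{p_1',\dots,p_n'\}$ is not a strongly connected component'' is used in the contrapositive direction, together with Lemma~\ref{lem:sccseqc}/Proposition~\ref{prop:sccinduction} to pin down that $C = C' \seqc q\uks$ for the appropriate $C'$. I would isolate this as a short sublemma: for a basic SCC $C = C'\seqc q\uks$, a state $p'\seqc q\uks$ belongs to $C$ iff $p'$ is reachable from and can reach (possibly via a terminating state, triggering a loop of $q\uks$) some element of $C'$ — and then the two cases of Lemma~\ref{lem:starsteps} land squarely inside this characterisation.
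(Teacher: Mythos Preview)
The paper (an extended abstract) does not include a proof of this lemma, so I assess your plan on its own.

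Your high-level strategy is right: by \Lemma{starsteps}, every state reachable from $p_i=p_i'\seqc q\uks$ has the form $s\seqc q\uks$, so any exit target $r$ is $s\seqc q\uks$; you then try to show that normedness of $r$ forces $r\in C$. The gap is in that last step, and your own attempted explanation points the wrong way. You write that ``the left components $p_i'$ are themselves mutually reachable along it'', but the basicness hypothesis says exactly the opposite, and that is what you need. The correct use is: if every transition on the cycle $p_i\steptc p_i$ were derived by rule~\ref{osrule:seqcstepl}, then the left components along the cycle would witness $p_j'\steprtc p_k'$ for all $j,k$, and one checks $C'=\{p_1',\dots,p_n'\}$ is then also maximal (any $p'$ mutually reachable with some $p_j'$ yields $p'\seqc q\uks\in C$ via rule~\ref{osrule:seqcstepl}, forcing $p'\in C'$), so $C'$ would be an SCC, contradicting basicness. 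Hence some step on the cycle uses rule~\ref{osrule:seqcstepr}: there is $u\seqc q\uks$ on the cycle with $\term{u}$, and the next state is $q''\seqc q\uks$ for some $q\step{}q''$. As this lies between two elements of $C$, maximality gives $q''\seqc q\uks\in C$.

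Now you must actually invoke normedness of $r$ in Case~1 (you don't): since $\term{q\uks}$, normedness of $r=s\seqc q\uks$ forces $s$ normed, so $r\steprtc t\seqc q\uks$ with $\term{t}$. From $t\seqc q\uks$, rule~\ref{osrule:seqcstepr} with the same transition $q\step{}q''$ gives $t\seqc q\uks\step{}q''\seqc q\uks\in C$, whence $r\steprtc p_i$ and $r\in C$. Two further slips: your claim that \emph{every} $p'\seqc q\uks$ with $p_i'\steprtc p'$ is mutually reachable with $p_i$ is false without normedness (take $p'=\dl$); and in Case~2 you speak of the cycle passing through ``$q\uks$ itself'', but $q\uks$ is not reachable from $p_i$ --- only states of the form $s\seqc q\uks$ are. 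The plan is salvageable, but the key mechanism --- basicness forces a rule-\ref{osrule:seqcstepr} step on the cycle, which then serves as a universal re-entry point from any $t\seqc q\uks$ with $\term{t}$ --- must be made explicit.
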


\begin{lemma}\label{lem:seqcaes}
  Let $C$ be a non-trivial strongly connected component in
  \BPAszo{}, let $p\in C$, and let $q$ be a \BPAszo{} process
  expression such that $C \seqc q$ is a strongly connected component.
  Then $p\seqc q$ is an alive exit state in $C\seqc q$
    iff $p$ is an alive exit state in $C$ and $q$ is normed.
\end{lemma}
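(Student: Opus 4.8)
The plan is to prove the two implications separately, working throughout from the operational rules~\ref{osrule:seqcstepl} and~\ref{osrule:seqcstepr} for sequential composition. Two preliminary observations will be used repeatedly. First, since $\seqc$ is a term constructor and $C\seqc q=\{x\seqc q\mid x\in C\}$, a state of the form $x\seqc q$ belongs to $C\seqc q$ if and only if $x\in C$; consequently the transitions of $p\seqc q$ supplied by rule~\ref{osrule:seqcstepl} leave $C\seqc q$ exactly when the underlying transition of $p$ leaves $C$. Second, $p'\seqc q$ is normed if and only if both $p'$ and $q$ are normed: from right to left one concatenates a transition sequence from $p'$ to a terminating state with one from $q$ (the second segment using rule~\ref{osrule:seqcstepr}); from left to right one notes that along any transition sequence issuing from $p'\seqc q$ every state is either of the form $x\seqc q$ with $p'\steprtc{}x$, or is reachable from $q$ and was entered through some $x\seqc q$ with $\term{x}$, so a terminating state at the end of such a sequence witnesses that both $p'$ and $q$ are normed.

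For the implication from right to left, assume $p$ is an alive exit state in $C$ and $q$ is normed. If $p$ has a normed exit transition $(a,p')$, then rule~\ref{osrule:seqcstepl} gives $p\seqc q\step{a}p'\seqc q$, and $p'\seqc q\notin C\seqc q$ because $p'\notin C$, while $p'\seqc q$ is normed by the second observation; hence $(a,p'\seqc q)$ is a normed exit transition from $p\seqc q$. If instead $\term{p}$, then either also $\term{q}$, and then $\term{p\seqc q}$; or $q$ is normed but not terminating, in which case $q$ is not a star expression (star expressions terminate, rule~\ref{osrule:starterm}), so by Definition~\ref{def:counter-BPA}(\ref{item:OCseqc}) we have $\OC{x\seqc q}=\OC{q}+1$ for every $x$. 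Taking the first step $q\step{a}q_1$ of a transition sequence from $q$ to a terminating state, Lemma~\ref{lem:OC-monotonic} yields $\OC{q_1}\leq\OC{q}<\OC{x\seqc q}$, so $q_1$ is not of the form $x\seqc q$ and therefore $q_1\notin C\seqc q$; since $q_1$ is normed and $p\seqc q\step{a}q_1$ by rule~\ref{osrule:seqcstepr}, this is a normed exit transition from $p\seqc q$. In all cases $p\seqc q$ is an alive exit state in $C\seqc q$.

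For the implication from left to right, assume $p\seqc q$ is an alive exit state in $C\seqc q$. If $\term{p}$, then $p$ is already an alive exit state in $C$, and $q$ is normed because: if $\term{p\seqc q}$ then $\term{q}$ by rule~\ref{osrule:seqcterm}; otherwise there is a normed exit transition $(a,r)$ from $p\seqc q$, which, as a transition of a sequential composition, comes from rule~\ref{osrule:seqcstepl}---so $r=p_1\seqc q$ is normed and the second observation gives $q$ normed---or from rule~\ref{osrule:seqcstepr}---so $q\step{a}r$ with $r$ normed, whence $q$ is normed. If $\term{p}$ fails, then $\term{p\seqc q}$ fails too, so there is a normed exit transition $(a,r)$ from $p\seqc q$; since rule~\ref{osrule:seqcstepr} has premise $\term{p}$, this transition must come from rule~\ref{osrule:seqcstepl}, so $r=p_1\seqc q$ with $p\step{a}p_1$, and $r\notin C\seqc q$ forces $p_1\notin C$. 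The normedness of $r=p_1\seqc q$ together with the second observation shows both $p_1$ and $q$ are normed, so $(a,p_1)$ is a normed exit transition from $p$; hence $p$ is an alive exit state in $C$ and $q$ is normed.

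The step I expect to be the main obstacle is the sub-case of the right-to-left direction in which $\term{p}$ but $q$ does not terminate: there one must exclude that the transitions $p\seqc q\step{}q_1$ newly created by rule~\ref{osrule:seqcstepr} lead back inside $C\seqc q$. The resolution rests on the observation that a normed, non-terminating $q$ cannot be a star expression, so that prefixing $q$ with anything strictly increases the measure $\OC{\_}$, and hence no single $q$-transition can reach a state of the form $x\seqc q$.
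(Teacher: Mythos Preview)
The paper, being an extended abstract, omits the proof of this lemma, so there is no reference argument to compare against. Your proof is correct and self-contained. The two preliminary observations are sound: the first follows from free term formation (distinct left factors give distinct sequential compositions, so $x\seqc q\in C\seqc q$ forces $x\in C$), and the second from a routine induction on the length of a transition sequence out of $p'\seqc q$ using only rules~\ref{osrule:seqcstepl}--\ref{osrule:seqcterm}. The case split in both directions is exhaustive, and each appeal to the operational rules is accurate; in particular, in the left-to-right direction with $\notterm{p}$ you correctly exclude rule~\ref{osrule:seqcstepr} and then use injectivity of $\seqc$ to transport $r\notin C\seqc q$ back to $p_1\notin C$.

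The sub-case you single out as the main obstacle --- $\term{p}$ with $q$ normed but $\notterm{q}$ --- is indeed the only place where something beyond bookkeeping is required, and your resolution via the measure $\OC{\_}$ is both correct and entirely in the spirit of the paper's toolkit (compare the use of $\OC{\_}$ in the proof of Proposition~\ref{prop:sccinduction}). Since a non-terminating $q$ cannot be a star expression by rule~\ref{osrule:starterm}, Definition~\ref{def:counter-BPA}(\ref{item:OCseqc}) yields $\OC{x\seqc q}=\OC{q}+1$ for every $x$, while Lemma~\ref{lem:OC-monotonic} gives $\OC{q_1}\leq\OC{q}$ for any $q\step{}q_1$; hence $q_1$ cannot be of the form $x\seqc q$ and therefore lies outside $C\seqc q$, as you claim.
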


For a characterisation of the set of normed exit transitions of a
sequential composition, it is convenient to have the following
notation: if $E$ is a set of exit transitions $E$ and $p$ is a
\BPAszo{} expression, then $E\seqc p$ is defined by
  $E \seqc p = \{ (a,q \seqc p) \mid (a,q) \in E \}$.

\begin{lemma}\label{lem:seqcaes2}
  Let $C$ be a non-trivial strongly connected component in
  \BPAszo{}, let $p\in C$, and let $q$ be a normed \BPAszo{} process
  expression such that
    $C \seqc q$ is a strongly connected component.
  Then
  \begin{equation*}
    \Extn{p\seqc q}=
      \left\{\begin{array}{ll}
        \Extn{p}\seqc q
           \cup \{(a,r)\mid
                    r\not\in C\seqc q
                      \wedge
                    \text{$r$ is normed}
                      \wedge
                    q\step{a}r\}
          & \text{if $\term{p}$; and}\\
        \Extn{p}\seqc q
          & \text{if $\notterm{p}$.}
      \end{array}\right.
  \end{equation*}
\end{lemma}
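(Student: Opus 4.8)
The plan is to analyse the transitions out of a state of the form $p\seqc q$ in the strongly connected component $C\seqc q$, distinguishing according to whether the transition comes from $p$ moving (rule~\ref{osrule:seqcstepl}) or from $p$ terminating and $q$ moving (rule~\ref{osrule:seqcstepr}). Since $C$ is non-trivial, it is (by \Lemma{sccseqc}) of the form $C = C'\seqc r$, but more to the point, by \Proposition{sccinduction} it is built from a basic component by repeated postfixing; in particular every $p\in C$ is a sequential composition, so the only operational rules that can fire at $p\seqc q$ are \ref{osrule:seqcstepl} and \ref{osrule:seqcstepr}. I would first record this inventory: every transition $p\seqc q\step{a}s$ is either (a) $s = p'\seqc q$ with $p\step{a}p'$, or (b) $\term{p}$ and $s = q'$ with $q\step{a}q'$.

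Next I would determine, for each of these two kinds of transition, whether it is an exit transition from $C\seqc q$ and whether it is normed. For case (a): $p'\seqc q\not\in C\seqc q$ iff $p'\not\in C$ (here one uses that $q$ is fixed and that $C$ is closed downward along the component — if $p'\in C$ then $p'\seqc q\in C\seqc q$, and conversely), so the case-(a) transitions that are exit transitions are exactly those of the form $(a,p'\seqc q)$ with $(a,p')\in\Ext{p}$. Such a transition is normed iff $p'\seqc q$ is normed; since $q$ is assumed normed, $p'\seqc q$ is normed iff $p'$ is normed, so the normed case-(a) exit transitions are precisely $\Extn{p}\seqc q$. This gives the common summand $\Extn{p}\seqc q$ appearing in both branches. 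For case (b): these transitions exist only when $\term{p}$, which explains the case split in the statement; when $\notterm{p}$ there are no case-(b) transitions and we are done. When $\term{p}$, a case-(b) transition $(a,q')$ with $q\step{a}q'$ is an exit transition iff $q'\not\in C\seqc q$, and it is normed iff $q'$ is normed; writing $r$ for $q'$, this contributes exactly $\{(a,r)\mid r\not\in C\seqc q\wedge r\text{ is normed}\wedge q\step{a}r\}$, which is the extra summand in the $\term{p}$ branch.

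The remaining point, and the one I expect to be the main obstacle, is to argue that the two families of transitions in the $\term{p}$ case are genuinely disjoint — i.e.\ that no case-(b) exit transition target $r$ equals a case-(a) exit transition target $p'\seqc q$ with $p'\not\in C$, so that the union is a true union and the counting is clean — and, conversely, that we have not double-counted or missed anything. Concretely one must check that every normed exit transition of $p\seqc q$ arises from rule~\ref{osrule:seqcstepl} or~\ref{osrule:seqcstepr} as above (immediate from the rule inventory) and that each of the listed transitions really is an exit transition (the containment facts $p'\seqc q\in C\seqc q\iff p'\in C$ and the normedness transfer through $\seqc q$ with $q$ normed). The normedness transfer is routine using \Lemma{starsteps}-style reasoning on $\seqc q$, or directly: $p'\seqc q$ is normed iff some reachable state of $p'\seqc q$ terminates, and by rule~\ref{osrule:seqcterm} that forces both a terminating state reachable from $p'$ and (since $q$ is normed, after $p'$ terminates $q$ can reach a terminating state) — so $p'$ normed together with $q$ normed gives $p'\seqc q$ normed, and the converse direction only needs $p'$ normed. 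Assembling these observations yields exactly the displayed equation.
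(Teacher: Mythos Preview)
The paper is an extended abstract and omits the proof of this lemma, so there is no argument to compare against; the natural route is precisely the case analysis on the two transition rules for sequential composition that you carry out, and your argument is essentially correct.

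A few points are worth tightening. First, your appeal to \Lemma{sccseqc} and \Proposition{sccinduction} to argue that every $p\in C$ is a sequential composition is a detour: regardless of the shape of $p$, the only rules deriving a transition from the expression $p\seqc q$ are rules~\ref{osrule:seqcstepl} and~\ref{osrule:seqcstepr}, because $p\seqc q$ itself is a sequential composition. Second, the biconditional $p'\seqc q\in C\seqc q\iff p'\in C$ does not rest on any ``closed downward'' property of $C$; it follows immediately from the definition $C\seqc q=\{c\seqc q\mid c\in C\}$ and the fact that $\seqc$ is a free syntactic constructor, so $p'\seqc q=c\seqc q$ forces $p'=c$. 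Third, the normedness transfer ($p'\seqc q$ normed iff $p'$ normed, given $q$ normed) is straightforward from rules~\ref{osrule:seqcstepl}--\ref{osrule:seqcterm} and does not need \Lemma{starsteps}. Finally, the disjointness concern in your last paragraph is unnecessary: the claim is a set equality, so overlap between the two summands on the right is harmless; the two inclusions you sketch already establish the result.
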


\begin{proposition}\label{prop:aes-same-aet}
  Let $C$ be a non-trivial strongly connected component in \BPAszo{}.
  If $p_1$ and $p_2$ are alive exit states in $C$, then
    $\Extn{p_1}=\Extn{p_2}$.
\end{proposition}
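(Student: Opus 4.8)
The plan is to prove Proposition~\ref{prop:aes-same-aet} by induction on the structure of the non-trivial strongly connected component $C$, using the inductive characterisation from Proposition~\ref{prop:sccinduction}: either $C$ is basic, or $C = C' \seqc q$ for some non-trivial strongly connected component $C'$ and \BPAszo{} expression $q$.

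For the \textbf{base case}, suppose $C$ is basic. Then by Lemma~\ref{lem:bsccnoextl} we have $\Extn{p} = \emptyset$ for all $p \in C$. But an alive exit state $p$ either has a normed exit transition or satisfies $\term{p}$; since $\Extn{p}=\emptyset$ rules out the former, every alive exit state in $C$ must be terminating, and hence vacuously $\Extn{p_1} = \emptyset = \Extn{p_2}$ for any two alive exit states $p_1, p_2 \in C$. (One small thing to verify here is that a basic strongly connected component can indeed contain terminating states — it can, as Figure~\ref{fig:ex-cycle} illustrates — so the statement is not vacuous overall, but this does not affect the argument.)

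For the \textbf{inductive step}, suppose $C = C' \seqc q$ with $C'$ a non-trivial strongly connected component, and let $p_1 \seqc q$ and $p_2 \seqc q$ be alive exit states in $C$, with $p_1, p_2 \in C'$. First, if $q$ were not normed, then by Lemma~\ref{lem:seqcaes} neither $p_i \seqc q$ could be an alive exit state, contradicting our assumption; so $q$ is normed. By Lemma~\ref{lem:seqcaes} again, each $p_i$ is an alive exit state in $C'$, so the induction hypothesis gives $\Extn{p_1} = \Extn{p_2}$. Now I would split on whether $p_1$ and $p_2$ terminate. By Lemma~\ref{lem:seqcaes2}, $\Extn{p_i \seqc q}$ equals $\Extn{p_i} \seqc q$ when $\notterm{p_i}$, and equals $\Extn{p_i} \seqc q$ together with the fixed set $\{(a,r) \mid r \notin C\seqc q \wedge \text{$r$ normed} \wedge q \step{a} r\}$ when $\term{p_i}$. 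Since $\Extn{p_1} = \Extn{p_2}$, the two sets $\Extn{p_1 \seqc q}$ and $\Extn{p_2 \seqc q}$ agree provided $p_1$ and $p_2$ have the same termination status. If $\term{p_1}$ and $\term{p_2}$, or if $\notterm{p_1}$ and $\notterm{p_2}$, we are done immediately.

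The \textbf{remaining — and genuinely delicate — case} is when exactly one of $p_1, p_2$ terminates, say $\term{p_1}$ and $\notterm{p_2}$. Here I need to show that the extra set $\{(a,r) \mid r \notin C\seqc q \wedge \text{$r$ normed} \wedge q \step{a} r\}$ contributed on the $p_1$ side is actually empty, so that $\Extn{p_1 \seqc q} = \Extn{p_1}\seqc q = \Extn{p_2}\seqc q = \Extn{p_2 \seqc q}$. The idea is that since $p_2 \in C'$ and $p_1 \in C'$ lie in the same strongly connected component, $p_1 \steptc{} p_2$; as $\term{p_1}$ and $q$ is normed with $q \step{a} r$, the operational rules let $p_1 \seqc q$ reach $q$ and thence $r$, but one can also route this through $p_2$ — the point being that $r$ must already be reachable from $q$ and, via the cycle structure of $C'$, that any such $r$ reachable by executing $q$ from a terminating state is in fact reachable from every state of $C\seqc q$, which would force $r \in C \seqc q$, a contradiction with $r \notin C\seqc q$. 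Making this precise is where I expect the real work to be: I would argue that because $C' \seqc q$ is a strongly connected component and $p_1$ is terminating, an application of rule~\ref{osrule:seqcstepr} shows $p_1 \seqc q \step{a} r$ would put $r$ into $\Reach{p_2 \seqc q}$ as well (going $p_2 \seqc q \steptc{} p_1 \seqc q \step{a} r$), and since $q \step{a} r$ with $r$ normed, this $r$ would be reachable within the component's "orbit"; combined with the maximality of $C$ as a strongly connected component and normedness considerations (possibly invoking $\OC{\_}$ via Lemma~\ref{lem:OC-monotonic} to rule out $r \in C$ when $r$ results from firing $q$'s transition), one derives that no such $r \notin C\seqc q$ can exist. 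This termination-asymmetry case is the crux; the rest is bookkeeping with the two preceding lemmas.
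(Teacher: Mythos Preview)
Your overall strategy matches the paper's: induction along the decomposition of Proposition~\ref{prop:sccinduction}, with Lemma~\ref{lem:bsccnoextl} handling the basic case and Lemmas~\ref{lem:seqcaes} and~\ref{lem:seqcaes2} driving the step. The gap is precisely your ``remaining delicate case''. The paper avoids it altogether by \emph{strengthening the induction hypothesis}: one proves simultaneously that $\Extn{p_1}=\Extn{p_2}$ \emph{and} that $\term{p_1}$ iff $\term{p_2}$ for all alive exit states $p_1,p_2\in C$. In the basic case both alive exit states terminate (since $\Extn{p_i}=\emptyset$ forces $\term{p_i}$), and in the step $\term{p_i'\seqc q}$ iff $\term{p_i'}\wedge\term{q}$, so the termination clause propagates. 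With this strengthening the asymmetric case simply never arises, and Lemma~\ref{lem:seqcaes2} finishes the argument in one line for each of the two symmetric cases.

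Your attempted direct treatment of the asymmetric case does not go through. You try to argue that the set $\{(a,r)\mid r\notin C\seqc q,\ r\ \text{normed},\ q\step{a}r\}$ is empty, essentially by claiming that any such $r$, being reachable from every state of $C\seqc q$, must lie in $C\seqc q$ by maximality. But maximality of a strongly connected component only says you cannot enlarge the set of \emph{mutually} reachable states; it does not prevent one-way reachability to states outside. Nothing you have written forces $r$ to reach back into $C\seqc q$, so the argument collapses. (In fact this extra set is typically \emph{non}-empty: it is exactly where the genuine exit transitions contributed by $q$ live.) The correct conclusion is not that the extra set vanishes, but that the asymmetric premise $\term{p_1}\wedge\notterm{p_2}$ is itself impossible for alive exit states---and that is what the strengthened induction delivers.
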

\begin{proof}
  Suppose that $p_1$ and $p_2$ are alive exit states; we prove
  by induction on the structure of non-trivial strongly connected
  components in \BPAszo{} as given by
  Proposition~\ref{prop:sccinduction} that
    $\Extn{p_1} = \Extn{p_2}$
  and $\term{p_1}$ iff $\term{p_2}$.

  If $C$ is basic, then by Lemma~\ref{lem:bsccnoextl}
    $\Extn{p_1} = \emptyset = \Extn{p_2}$,
  and, since $p_1$ and $p_2$ are alive exit states, it also follows
  from this that both $\term{p_1}$ and $\term{p_2}$.

  Suppose that $C=C'\seqc q$, with $C'$ a non-trivial strongly
  connected component, and let $p_1',p_2'\in C'$ be such that
  $p_1=p_1'\seqc q$ and $p_2=p_2'\seqc q$. Since $p_1$ and $p_2$ are
  alive exit states, by Lemma~\ref{lem:seqcaes} so are $p_1'$ and
  $p_2'$. Hence, by the induction hypothesis,
    $\Extn{p_1'}=\Extn{p_2'}$
  and
    $\term{p_1'}$ iff $\term{p_2'}$.
  From the latter it follows that $\term{p_1}$ iff
  $\term{p_2}$.
  We now apply Lemma~\ref{lem:seqcaes2}: if, on the one hand,
  $\term{p_1}$ and $\term{p_2}$, then
  \begin{multline*}
    \Extn{p_1} =
      \Extn{p_1'}\seqc q
        \cup \{(a,r)\mid
                 r\not\in C
                   \wedge
                 \exists r'.\ q\step{a}r\steprtc{}\term{r'}\} \\
               =
      \Extn{p_2'}\seqc q
        \cup \{(a,r)\mid
                 r\not\in C
                   \wedge
                 \exists r'.\ q\step{a}r\steprtc{}\term{r'}\}
               =
      \Extn{p_2}
  \enskip,
  \end{multline*}
  and if, on the other hand, $\notterm{p_1}$ and $\notterm{p_2}$,
  then
    $\Extn{p_1}
      = \Extn{p_1'}\seqc q
      = \Extn{p_2'}\seqc q
      = \Extn{p_2}$.\qedhere
\end{proof}

\begin{figure}[htb]
 \begin{center}
  \begin{transsys}(15,18)(0,-2)
   \graphset{iangle=180,fangle=180}
   \node[Nmarks=i](p0)(0,15){$p_0$}
   \node(p1)(15,15){$p_1$}
   \node[Nmarks=f](p2)(0,0){$p_2$}
   \node(p3)(15,0){$p_3$}
   {\small
    {\graphset{curvedepth=2}
     \edge(p0,p1){$a$}\edge(p1,p0){$b$}
     \edge(p2,p3){$a$}\edge(p3,p2){$b$}
    }
    \edge(p0,p2){$c$}
    \edge(p1,p3){$c$}
   }
  \end{transsys}
  \caption{A \PAszo{}-expressible automaton that is not expressible in \BPAszo{}.}\label{fig:BPAszoprecPAszoexample}
 \end{center}
\end{figure}

  The \PAszo{} expression $p_0=\emp \seqc (a\seqc b)\uks \parc c$
  gives rise to the automaton shown in
  Figure~\ref{fig:BPAszoprecPAszoexample}. It has a strongly
  connected component $C=\{p_0,p_1\}$ of which the alive exit states
  have different normed exit transitions. Hence, by
  Proposition~\ref{prop:aes-same-aet}, $p_0$ is not
  \BPAszo{}-expressible.

\begin{theorem}\label{thm:bpa-pa}
  $\BPAszo$ is less expressive than $\PAszo$.
\end{theorem}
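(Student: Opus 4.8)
===  PROOF PROPOSAL  ===

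The plan is to prove the two halves of the statement separately, following the pattern set up in this section. For the containment $\PAszoExp \supseteq$-direction (i.e.\ that every $\BPAszo$ state is bisimilar to a $\PAszo$ state), there is nothing to do: $\BPAszoExp \subseteq \PAszoExp$, the operational rules for $\BPAszo$ are just the restriction of those for $\PAszo$, and the identity relation on $\BPAszoExp$ is a bisimulation between $\BPAszoTSS$ and $\PAszoTSS$; so every $\BPAszo$ state is even equal to a $\PAszo$ state. Hence $\lexp$ reduces to exhibiting a single $\PAszo$ state not bisimilar to any $\BPAszo$ state.

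For that witness I would take the $\PAszo$ expression $p_0 = \emp \seqc (a\seqc b)\uks \parc c$ and its associated automaton, displayed in Figure~\ref{fig:BPAszoprecPAszoexample}. First I would spell out, directly from the operational rules for $\PAszo$, that the automaton of $p_0$ has exactly the four reachable states $p_0,p_1,p_2,p_3$ arranged as in the figure: $p_0 \step{a} p_1 \step{b} p_0$, $p_2 \step{a} p_3 \step{b} p_2$, $p_0\step{c}p_2$, $p_1\step{c}p_3$, with $p_2$ terminating and none of $p_0,p_1,p_3$ terminating. In particular $C=\{p_0,p_1\}$ is a non-trivial strongly connected component of $\PAszoTSS$, the states $p_2,p_3$ lie outside $C$, and both $p_0$ and $p_1$ are alive exit states of $C$ whose (sole) normed exit transitions are $(c,p_2)$ and $(c,p_3)$ respectively. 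Since $p_2 \not\bisim p_3$ (only $p_2$ terminates), we have $\Extn{p_0} = \{(c,p_2)\} \neq \{(c,p_3)\} = \Extn{p_1}$, and moreover no state bisimilar to $p_0$ can be bisimilar to a state bisimilar to $p_1$.

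Now suppose, for contradiction, that $p_0 \bisim r$ for some $\BPAszo$ expression $r$. Both $\PAszoTSS$ and $\BPAszoTSS$ are regular by Lemma~\ref{lem:regularity}, so Lemma~\ref{lem:scc-bisim} applies: since $p_0$ lies in the strongly connected component $C=\{p_0,p_1\}$, there is a strongly connected component $D$ in $\BPAszoTSS$, reachable from $r$, such that for each of $p_0,p_1$ there is a bisimilar state in $D$ — say $p_0 \bisim r_0 \in D$ and $p_1 \bisim r_1 \in D$. Because $p_0$ has an outgoing transition back to $p_1$ and on to $p_0$, $D$ is non-trivial. Bisimilarity preserves termination and the existence of normed exit transitions relative to strongly connected components (a state reachable from $r_i$ outside $D$ is matched by a state reachable from $p_i$; one checks it must lie outside $C$, using that any matching state inside $C$ would have to be bisimilar to $r_0$ or $r_1$, contradicting $p_2\not\bisim p_3$); hence $r_0$ and $r_1$ are alive exit states of $D$ with $\Extn{r_0}$ matching $\{(c,p_2)\}$ and $\Extn{r_1}$ matching $\{(c,p_3)\}$ transition-by-transition up to bisimilarity. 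But $p_2\not\bisim p_3$ forces $\Extn{r_0}$ and $\Extn{r_1}$ to have non-bisimilar targets, so $\Extn{r_0}\neq\Extn{r_1}$, contradicting Proposition~\ref{prop:aes-same-aet}. Therefore no such $r$ exists, and $\BPAszo \lexp \PAszo$.

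The step I expect to be the main obstacle is the careful transfer of the notions "alive exit state" and "normed exit transition" across the bisimulation in the last paragraph: Lemma~\ref{lem:scc-bisim} only guarantees that each state of $C$ has \emph{some} bisimilar partner in $D$, not that $D$ is the image of $C$ or that exit transitions correspond one-to-one, so the argument that $\Extn{r_0}$ and $\Extn{r_1}$ genuinely differ (rather than merely mapping onto the same set of bisimilarity classes) needs the observation that $p_2$ and $p_3$ are distinguishable and that every state in $C$ is already accounted for by $r_0,r_1$. Everything else — the shape of the automaton of $p_0$, regularity, and the appeal to Proposition~\ref{prop:aes-same-aet} — is routine.
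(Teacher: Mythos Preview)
Your proposal is correct and follows exactly the paper's approach: use the witness $p_0=\emp\seqc(a\seqc b)\uks\parc c$, observe that its strongly connected component $\{p_0,p_1\}$ has alive exit states with distinct normed exit transitions, and combine Proposition~\ref{prop:aes-same-aet} with Lemma~\ref{lem:scc-bisim} to rule out any bisimilar \BPAszo{} expression. The paper states this more tersely (it simply asserts that $p_0$ is not \BPAszo{}-expressible ``by Proposition~\ref{prop:aes-same-aet}''), so your write-up is a faithful elaboration of the intended argument.

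One small wrinkle worth tightening: the parenthetical justification you give argues the wrong direction (that an exit transition of $r_i$ must match an exit transition of $p_i$), and the phrase ``contradicting $p_2\not\bisim p_3$'' does not belong there. What you actually need is: (i) the transitions $p_0\step{c}p_2$ and $p_1\step{c}p_3$ are matched by $r_0\step{c}r_0'$ and $r_1\step{c}r_1'$ with $r_0',r_1'\notin D$ (if $r_0'\in D$ then a path in $D$ back to $r_0$ would force $p_2$ to reach a state bisimilar to $p_0$, which is impossible), and these targets are normed, so $r_0,r_1$ are alive exit states of $D$; (ii) $r_0\neq r_1$, since $p_0\not\bisim p_1$, so $D$ is non-trivial; and (iii) $(c,r_0')\in\Extn{r_0}\setminus\Extn{r_1}$, because $r_1\step{c}r_0'$ would give $p_3\bisim r_0'\bisim p_2$. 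With those adjustments the transfer step is clean.
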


\section[Relative Expressiveness of PA*01 and ACP*01]{Relative Expressiveness of $\PAszo$ and $\ACPszo$}\label{sec:pa-acp}

The proof in~\cite{BFP01} that \PAsd{} is less expressive than
\ACPs{} uses the same expression as the one showing that \BPAsd{} is
less expressive than \PAsd{}, but it presupposes that $\gamma(c,d)=e$.
It is claimed that the associated automaton fails the
following general property of cycles in $\PAsd${}:
\begin{quotation}\noindent
  If $C$ is a cycle reachable from a $\PAsz$ process term and there is
  a state in $C$ with a transition to a terminating state, then all
  other states in $C$ have only successors in $C$.
\end{quotation}
The claim, however, is incorrect, as illustrated by the following
example.
(We present the example in the syntax of
$\PAszo$, but it has a straightforward translation into the syntax of
$\PAsd$.)

\begin{example}
  Consider the $\PAszo$ expression
    $(a\seqc (b \altc b \seqc b))\uks \seqc d$,
  from which the cycle
  \begin{equation*}
    C = \{\emp \seqc (a \seqc (b \altc b \seqc b))\uks \seqc d,\
          (b \altc b\seqc b)\seqc (a \seqc (b \altc b \seqc b))\uks \seqc d \}
  \end{equation*}
  is reachable.
  Clearly, the first expression in $C$ can perform a $d$-transition to
  $\emp$. Then, according to the property above, every other expression
  only has transitions to expressions in $C$. However,
  \begin{equation*}
    (b \altc b\seqc b)\seqc (a \seqc (b \altc b\seqc b))\uks \seqc d
       \step{b} b \seqc(a\seqc (b\altc b\seqc b))\uks \seqc d \not \in C
  \enskip.
  \end{equation*}
\end{example}

If we replace, in the property above, the notion of cycle by the
notion of strongly connected component, then the resulting property
does hold for \PAsz{}, but it still fails for \PAszo{}.

\begin{example}
  Consider the $\PAszo$ expression
    $(a\seqc b)\uks \parc c$%
  it gives rise to the following non-trivial strongly connected component:
   $\{ \emp \seqc (a\seqc b)\uks \parc c,\
       b \seqc (a\seqc b)\uks \parc c
    \}$.
  The expression $\emp \seqc (a\seqc b)\uks \parc c$ can do a
  $c$-transition to $\emp \seqc (a\seqc b) \uks \parc \emp$, for which
  the termination predicate holds, but at the same time
  $b \seqc (a\seqc b)\uks \parc c$ has an exit transition
    $(c, b \seqc (a\seqc b)\uks \parc \emp)$.
\end{example}

In this section we shall establish that \PAszo{} is less expressive
than \ACPszo{}. To this end, we apply the same method as in
Section~\ref{sec:bpa-pa}. First, we syntactically characterise the
non-trivial strongly connected components associated with \PAszo{}
expressions.  Then, we conclude that a weakened version of the
aforementioned property for strongly connected components holds in
\PAszo{}, and present an \ACPszo{} expression that does not satisfy
it.

\subsection[Strongly Connected Components in PA*01]{Strongly Connected Components in $\PAszo$}

To give a syntactic characterisation of the non-trivial strongly
connected components in \PAszo{}, we reason again about the
operational semantics. First, we extend the measure $\OC{\_}$ from
Section~\ref{sec:bpa-pa} to \PAszo{} expressions.

\begin{definition}\label{def:counter-PA}
  Let $p$ be a \PAszo{} expression; $\OC{p}$ is defined with recursion
  on the structure of $p$ by the clauses
    (\ref{item:OCconst})--(\ref{item:OCstar})
  in Definition~\ref{def:counter-BPA} with the following clause added:
  \begin{enumerate}
  \renewcommand{\theenumi}{\roman{enumi}}
  \renewcommand{\labelenumi}{(\theenumi)}
  \addtocounter{enumi}{4}
  \item
    $\OC{p \parc q} = 0$.
  \end{enumerate}
\end{definition}

With the extension, the non-increasing measure $\OC{\_}$ still in most cases
decreases over transitions.
\begin{lemma}\label{lem:OC-monotonic-PA}
  If $p$ and $p'$ are $\PAszo$ expressions such that $p \step{}^{+} p'$,
  then $\OC{p} \geq \OC{p'}$. Moreover, if $\OC{p}=\OC{p'}$, then either
    $p=p_1\seqc q$ and $p'=p_1'\seqc q$, or
    $p=p_1\parc p_2$ and $p'=p_1'\parc p_2'$
  for some process expressions $p_1$, $p_2$, $p_1'$, $p_2'$, and $q$.
\end{lemma}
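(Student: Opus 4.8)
The plan is to mimic exactly the proof of \Lemma{OC-monotonic} for \BPAszo{}: first reduce to the single-step case $p\step{a}p'$ and argue by induction on the derivation of this transition from the operational rules (the rules \ref{osrule:empterm}--\ref{osrule:merge3} and \ref{osrule:starstep}--\ref{osrule:starterm}, since \PAszo{} uses Table~\ref{tab:ACPszoSOS} without the communication and encapsulation rules), and then lift to $p\steptc{}p'$ by a straightforward induction on the length of the transition sequence, chaining the inequalities $\OC{p}\geq\OC{p'}$ and noting that if all intermediate counters are equal then by the single-step claim each term in the chain has the required shape, and one checks that a sequential-composition shape is preserved by a further equal-counter step (the right factor $q$ is fixed by the rules \ref{osrule:seqcstepl}--\ref{osrule:seqcstepr}) and likewise a parallel-composition shape is preserved (rules \ref{osrule:merge1}--\ref{osrule:merge2} leave the overall $\parc$ shape intact). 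So the real content is the single-step base case.

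For the single-step case I would go through the operational rules one by one. The constant rules \ref{osrule:empterm} and \ref{osrule:astep} are immediate ($\OC{a}=1>0=\OC{\emp}$). For the alternative-composition rules \ref{osrule:altcstepl}--\ref{osrule:altcstepr}, $p\altc q\step{a}p'$ with $p\step{a}p'$ gives $\OC{p\altc q}=\max{\OC{p},\OC{q}}+1\geq\OC{p}+1>\OC{p}\geq\OC{p'}$ by the induction hypothesis, so the counter strictly decreases and the equal-counter clause is vacuous. For the star rule \ref{osrule:starstep}, $p\uks\step{a}p'\seqc p\uks$ with $p\step{a}p'$: here $\OC{p\uks}=1$ while $\OC{p'\seqc p\uks}=0$ since $p\uks$ is a star expression, so again a strict decrease. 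The new case is the parallel-composition rules \ref{osrule:merge1}--\ref{osrule:merge2}: say $p_1\parc p_2\step{a}p_1'\parc p_2$ because $p_1\step{a}p_1'$; then $\OC{p_1\parc p_2}=0=\OC{p_1'\parc p_2}$, the counter stays at $0$, and the result $p_1'\parc p_2$ is indeed of the form $p_1'\parc p_2'$ with $p_2'=p_2$, matching the second alternative of the lemma. The one delicate case, exactly as in the \BPAszo{} proof, is sequential composition: for rule \ref{osrule:seqcstepl}, $p_1\seqc q\step{a}p_1'\seqc q$ with $p_1\step{a}p_1'$, one distinguishes whether $q$ is a star expression (then both counters are $0$ and the shape $p_1'\seqc q$ matches) or not (then $\OC{p_1\seqc q}=\OC{q}+1=\OC{p_1'\seqc q}$, equal, and again the shape matches); for rule \ref{osrule:seqcstepr}, $p_1\seqc q\step{a}q'$ with $\term{p_1}$ and $q\step{a}q'$, one shows $\OC{p_1\seqc q}>\OC{q'}$: if $q$ is a star expression then $\OC{p_1\seqc q}=0$, but a terminating-$p_1$ step into the star unfolds $q=r\uks$ to $q'=r'\seqc r\uks$ with $\OC{q'}=0$, so this sub-case needs a small separate check that $\OC{p_1 \seqc r\uks} \ge \OC{r' \seqc r\uks}$, i.e.\ $0\ge 0$ — fine; if $q$ is not a star expression then $\OC{p_1\seqc q}=\OC{q}+1>\OC{q}\geq\OC{q'}$ by the induction hypothesis, a strict decrease.

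The main obstacle, as flagged, is the bookkeeping around sequential composition interacting with the Kleene star: one has to be careful that "$q$ is a star expression" is exactly the right trigger in clause~(\ref{item:OCseqc}) of Definition~\ref{def:counter-BPA}/\ref{def:counter-PA}, so that unfolding a star inside a sequential context does not increase the counter, and that when the counter is preserved the produced term genuinely has one of the two advertised shapes. No new idea beyond the \BPAszo{} argument is required; the parallel-composition rules behave well precisely because clause~(v) assigns $\OC{p\parc q}=0$ and the merge rules do not change the top-level operator, so the equal-counter outcome is always a parallel composition. Hence the proof is a routine extension of the proof of \Lemma{OC-monotonic} with the two extra rules handled as above.
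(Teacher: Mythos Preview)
Your proposal is correct and matches the paper's approach: the paper states \Lemma{OC-monotonic-PA} without proof (this is an extended abstract), but its proof of the \BPAszo{} analogue \Lemma{OC-monotonic} is exactly the two-stage argument you describe---induction on derivations for a single step, then induction on the length of the transition sequence---and the only addition for \PAszo{} is the straightforward parallel-composition case you handle via clause~(v). One small point to tidy: in the rule~\ref{osrule:seqcstepr} sub-case with $q=r\uks$ you correctly obtain $\OC{p_1\seqc r\uks}=0=\OC{r'\seqc r\uks}$, so the counter is \emph{preserved} rather than strictly decreased as your lead-in sentence suggests, and you should also record that the target $r'\seqc r\uks$ has the form $p_1'\seqc q$ with the \emph{same} right factor $q=r\uks$, so the equal-counter clause is satisfied there as well.
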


\begin{lemma}\label{lem:parcsteps}
  Let $p$, $q$ and $r$ be \PAszo{} process expressions such that
  $p\parc q\steprtc{} r$.  Then there exist \PAszo{} process
  expressions $p'$ and $q'$ such that $r=p'\parc q'$, $p\steprtc{}
  p'$ and $q\steprtc{}q'$.
\end{lemma}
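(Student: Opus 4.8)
The plan is to prove the statement by induction on the length $n$ of the transition sequence $p\parc q\steprtc{}r$.

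For $n=0$ we have $r=p\parc q$, so taking $p'=p$ and $q'=q$ settles the case, since trivially $p\steprtc{}p$ and $q\steprtc{}q$.

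For the inductive step, suppose the transition sequence has length $n+1$; then it decomposes as $p\parc q\steprtc{}\bar r\step{a}r$, where the sequence from $p\parc q$ to $\bar r$ has length $n$. By the induction hypothesis there exist \PAszo{} expressions $\bar p$ and $\bar q$ with $\bar r=\bar p\parc\bar q$, $p\steprtc{}\bar p$ and $q\steprtc{}\bar q$. It then remains to analyse the single transition $\bar p\parc\bar q\step{a}r$. The crucial observation is that in \PAszo{} the rules \ref{osrule:comm}--\ref{osrule:encap2} are absent, so the only operational rules whose conclusion is a transition from a parallel composition are \ref{osrule:merge1} and \ref{osrule:merge2}. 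Hence either (i)~$\bar p\step{a}p'$ and $r=p'\parc\bar q$, in which case we put $q'=\bar q$ and note that $p\steprtc{}\bar p\step{a}p'$ yields $p\steprtc{}p'$; or (ii)~$\bar q\step{a}q'$ and $r=\bar p\parc q'$, which is symmetric. In both cases $r$ has the required form $p'\parc q'$ with $p\steprtc{}p'$ and $q\steprtc{}q'$.

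There is no real obstacle here; the only point requiring care is the case analysis on the last transition, where one must be certain that no further kind of transition (in particular, no communication transition) is derivable from a parallel composition. This is exactly where the absence of a communication function in \PAszo{} is used, and it is also the reason why the analogous statement fails for \ACPszo{}.
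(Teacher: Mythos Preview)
Your proof is correct. The paper, being an extended abstract, omits the proof of this lemma entirely; your induction on the length of the transition sequence, with the case analysis on the last step using only rules~\ref{osrule:merge1} and~\ref{osrule:merge2}, is exactly the standard argument one would expect here.
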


Let $P$ and $Q$ be sets of process expressions; by $P\parc Q$ we
denote the set of process expressions
  $P\parc Q=\{p\parc q\mid p\in P\wedge q\in Q\}$.
We also write $P \parc q$ and $p \parc Q$ for $P \parc \{q\}$
and $\{ p \} \parc Q$, respectively.

The proof of the following lemma, characterising the syntactic form of
non-trivial strongly connected components in \PAszo{}, is a
straightforward adaptation and extension of the proof of
Lemma~\ref{lem:sccseqc}, using \Lemma{OC-monotonic-PA} and
Lemma~\ref{lem:parcsteps} instead of \Lemma{OC-monotonic}.
\begin{lemma}\label{lem:sccseqcparc}
  If $C$ is a non-trivial strongly connected component in \PAszo{},
  then either there exist a set of process expressions $C'$ and a
  process expression $q$ such that $C=C'\seqc q$, or there exist
  strongly connected components $C_1$ and $C_2$ in \PAszo{}, at least
  one of them non-trivial, such that $C=C_1\parc C_2$.
\end{lemma}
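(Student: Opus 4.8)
The plan is to mimic the proof of \Lemma{sccseqc} closely, but with a case analysis driven by the syntactic form of a ``witnessing'' expression in the strongly connected component. First I would observe that, by \Lemma{OC-monotonic-PA}, every expression in a non-trivial strongly connected component $C$ must be such that it can return to itself through a sequence of transitions of which at least one strictly decreases the measure is \emph{not} possible---so along a cyclic transition sequence through $C$ the measure $\OC{\_}$ is constant, and hence by \Lemma{OC-monotonic-PA} every expression $p\in C$ is either of the form $p_1\seqc q$ or of the form $p_1\parc p_2$, and moreover this shape is preserved along the entire cycle. So $C$ consists entirely of sequential compositions, or entirely of parallel compositions (it cannot mix the two, since a single transition cannot change $p_1\seqc q$ into a parallel composition or vice versa while keeping $\OC{\_}$ constant, as the operational rules for $\seqc$ only produce $\seqc$-expressions at the same level and the rules for $\parc$ only produce $\parc$-expressions).

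In the sequential case, I would argue exactly as in \Lemma{sccseqc}: pick some $p_1\seqc q\in C$, and show that the ``tail'' $q$ is common to all elements of $C$. The point is that, starting from $p_1\seqc q$, as long as $p_1$ has not terminated the transitions are all of the form $p_1\seqc q\step{a}p_1'\seqc q$ by rule~\ref{osrule:seqcstepl}, so the tail is preserved; and if at some point $\term{p_1'}$ and the computation continues via rule~\ref{osrule:seqcstepr} into $q$, then by \Lemma{OC-monotonic-PA} the measure would have to stay constant, which (inspecting the definition of $\OC{\_}$) forces $q$ to be a star expression, and then one can check---using \Lemma{starsteps}, suitably lifted to \PAszo{}, or a direct argument---that one never returns into $C$, contradicting strong connectedness. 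Hence no element of $C$ ever ``enters'' $q$, the tail $q$ is common, and $C=C'\seqc q$ with $C'=\{p_1\mid p_1\seqc q\in C\}$.

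In the parallel case, I would use \Lemma{parcsteps}: fix $p_1\parc p_2\in C$; every $r\in C$ is reachable from $p_1\parc p_2$, so $r=p_1'\parc p_2'$ with $p_1\steprtc{}p_1'$ and $p_2\steprtc{}p_2'$. Define $C_1$ and $C_2$ as the sets of left, respectively right, components occurring in elements of $C$; the natural claim is that $C=C_1\parc C_2$ and that $C_1$, $C_2$ are each strongly connected components of \PAszo{}. Strong connectedness of $C_1$ follows because if $p_1'\parc p_2'$ and $p_1''\parc p_2''$ both lie in $C$, then each is reachable from the other in $C$, and projecting the transition sequence onto the left component (using \Lemma{parcsteps} again) gives $p_1'\steprtc{}p_1''$; maximality I would get from maximality of $C$ together with the fact that $\parc$-transitions never synchronise the two projections in \PAszo{} (there is no communication). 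Finally, since $C$ is non-trivial, some element $p_1'\parc p_2'\in C$ satisfies $p_1'\parc p_2'\step{}^{+}p_1'\parc p_2'$; projecting, at least one of $p_1'\step{}^{+}p_1'$ or $p_2'\step{}^{+}p_2'$ holds, so at least one of $C_1$, $C_2$ is non-trivial.

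The main obstacle I expect is the maximality argument in each case---showing that $C'$ (resp. $C_1$, $C_2$) cannot be enlarged, i.e. that one has genuinely identified a strongly connected component and not merely a strongly connected \emph{set}. For the sequential case this is where one needs the observation that you never fall back into $C$ after entering $q$; for the parallel case one needs that the two projections evolve independently so that reachability in the product reflects reachability in the factors. Both are routine given \Lemma{OC-monotonic-PA}, \Lemma{parcsteps}, and the absence of communication, but they require care, and the statement of the lemma as quoted is careful to say only ``a set of process expressions $C'$'' in the sequential clause (not a strongly connected component), which makes the sequential case lighter; the real work is in the parallel clause.
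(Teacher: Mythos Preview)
Your overall plan is exactly what the paper indicates: split on the top-level shape via \Lemma{OC-monotonic-PA}, handle the sequential case as in \Lemma{sccseqc}, and handle the parallel case with \Lemma{parcsteps}. Your parallel-case sketch is correct, including the maximality and non-triviality arguments.

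Your sequential-case argument, however, contains a wrong step. You claim that if, at some $p_1'\seqc q$ with $\term{p_1'}$, the computation continues via rule~\ref{osrule:seqcstepr} and $q$ is forced to be a star expression, then ``one never returns into $C$, contradicting strong connectedness.'' This is false: when $q=r\uks$, rule~\ref{osrule:seqcstepr} produces $r'\seqc r\uks=r'\seqc q$, which has the \emph{same} tail $q$ and may perfectly well lie in $C$; indeed, in a basic strongly connected component (Definition~\ref{def:basic}) this is precisely how the cycle closes. So your conclusion ``no element of $C$ ever enters $q$'' is unjustified.

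Fortunately the detour through \Lemma{starsteps} is unnecessary. \Lemma{OC-monotonic-PA} already says that whenever $\OC{p}=\OC{p'}$ and the sequential alternative of the disjunction applies, then $p=p_1\seqc q$ and $p'=p_1'\seqc q$ for the \emph{same} $q$. Since the top-level decomposition of a sequential composition is syntactically unique, fixing any one element $p_1\seqc q\in C$ and applying the lemma to each $p'\in C$ immediately gives that every element of $C$ has right-hand factor $q$, so $C=C'\seqc q$ with $C'=\{p_1\mid p_1\seqc q\in C\}$. No case analysis on which operational rule fires is needed.
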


The notion of \emph{basic} strongly connected component in \PAszo{} is
obtained from Definition~\ref{def:basic} by replacing \BPAszo{} by
\PAszo{} everywhere in the definition. In
Proposition~\ref{prop:sccinduction} we gave an inductive
characterisation of non-trivial strongly connected components in
\BPAszo{}. There is a similar inductive characterisation of
non-trivial strongly connected components in \PAszo{}, obtained by
adding a case for parallel composition.

\begin{proposition}\label{prop:sccinduction-PA}
  Let $C$ be a non-trivial strongly connected component in \PAszo{}.
  Then one of the following holds:
  \begin{enumerate}
  \renewcommand{\theenumi}{\roman{enumi}}
  \renewcommand{\labelenumi}{(\theenumi)}
  \item $C$ is a basic strongly connected component; or
  \item there exist a non-trivial strongly connected component $C'$ and
    a \PAszo{} expression $q$ such that $C=C'\seqc q$; or
  \item there exist strongly connected components $C_1$ and $C_2$, at
    least one of them non-trivial, such that $C=C_1\parc C_2$.
  \end{enumerate}
\end{proposition}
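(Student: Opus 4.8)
The plan is to mimic the proof of Proposition~\ref{prop:sccinduction}, adding one extra case for parallel composition. First I would invoke Lemma~\ref{lem:sccseqcparc}, which splits into two cases: either $C=C'\seqc q$ for some set of process expressions $C'$ and some $q$, or $C=C_1\parc C_2$ with at least one of $C_1$, $C_2$ non-trivial. In the second case we are immediately done: clause~(iii) of the proposition holds. So the work is entirely in the first case, where $C=C'\seqc q$.

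In the case $C=C'\seqc q$, I would argue exactly as in Proposition~\ref{prop:sccinduction}. If $C'$ is itself a non-trivial strongly connected component, then clause~(ii) holds and we are done. Otherwise, $C'$ is not a strongly connected component (it cannot be trivial, since $C$ is non-trivial and $C=C'\seqc q$ forces $C'$ to have at least two distinct elements, or else a self-loop, which would make it a non-trivial SCC), so there exist $p,p'\in C'$ with $p\notmstep{}p'$. Since $C$ is a non-trivial strongly connected component, $p\seqc q\step{}^{+}p'\seqc q$, and using $p\notmstep{}p'$ one shows by induction on the length of this transition sequence that $q\step{}^{+}p'\seqc q$. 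By Lemma~\ref{lem:OC-monotonic-PA} this gives $\OC{q}\geq\OC{p'\seqc q}$; but by the definition of $\OC{\_}$ (clause~(ii) of Definition~\ref{def:counter-BPA}), $\OC{p'\seqc q}=\OC{q}+1$ unless $q$ is a star expression, so $q$ must be a star expression. Hence $C$ is basic (with witnesses $p_i'$ obtained by writing each element of $C'$ in the form $p_i'$ and taking the star argument of $q$), i.e., clause~(i) holds.

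The one point requiring a little care, and the place where the \PAszo{} setting differs from \BPAszo{}, is the induction establishing $q\step{}^{+}p'\seqc q$ from $p\seqc q\step{}^{+}p'\seqc q$ together with $p\notmstep{}p'$. In \BPAszo{} a transition out of $p\seqc q$ either comes from $p$ (giving $p''\seqc q$ with $p\step{}p''$) or, when $\term{p}$, from $q$. The same dichotomy holds for \PAszo{} expressions of the form $r\seqc q$, since the operational rules for sequential composition (\ref{osrule:seqcstepl} and \ref{osrule:seqcstepr}) are unchanged and no parallel-composition rule applies at the top level of $r\seqc q$; so the argument carries over verbatim. One must just track that along the transition sequence from $p\seqc q$ to $p'\seqc q$, once a transition "into $q$" is taken the subsequent expressions need no longer be of the form $r\seqc q$, but reaching $p'\seqc q$ again forces $q\step{}^{+}p'\seqc q$ as in the \BPAszo{} proof. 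I expect this bookkeeping to be the only mild obstacle; everything else is a direct transcription of the \BPAszo{} argument with Lemma~\ref{lem:OC-monotonic-PA} and Lemma~\ref{lem:sccseqcparc} in place of their \BPAszo{} counterparts.
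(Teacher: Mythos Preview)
Your proposal follows exactly the approach the paper (implicitly) takes: the paper states this proposition without proof, remarking only that it is obtained from the argument for Proposition~\ref{prop:sccinduction} by adding a case for parallel composition, and that is precisely what you do---invoke Lemma~\ref{lem:sccseqcparc}, discharge the parallel case as clause~(iii), and in the sequential case repeat the proof of Proposition~\ref{prop:sccinduction} with Lemma~\ref{lem:OC-monotonic-PA} in place of Lemma~\ref{lem:OC-monotonic}.

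One small correction: your parenthetical claim that $C'$ cannot be a trivial strongly connected component is wrong. For instance, $C=\{\emp\seqc a\uks\}$ is a non-trivial strongly connected component (since $\emp\seqc a\uks\step{a}\emp\seqc a\uks$ via rule~\ref{osrule:seqcstepr}) with $C'=\{\emp\}$, which \emph{is} a trivial strongly connected component. This does not derail the argument, however: even when $C'$ is a trivial singleton $\{p\}$ one still has $p\notmstep{}p$, so taking $p'=p$ supplies the required pair, and the remainder of the reasoning (forcing $q$ to be a star expression) goes through unchanged. The only residual wrinkle is that Definition~\ref{def:basic} literally requires $\{p_1',\dots,p_n'\}$ not to be a strongly connected component, which a trivial singleton technically is; but this edge case is already present in the paper's own proof of Proposition~\ref{prop:sccinduction} and is not something peculiar to your argument.
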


Note that, in the above proposition, one of the strongly connected
components $C_1$ and $C_2$ may be trivial in which case it
consists of a single \PAszo{} expression.

\subsection[PA*01 is less expressive than ACP*01]{$\PAszo \lexp \ACPszo$}

In Section~\ref{sec:bpa-pa} we deduced, from our syntactic
characterisation of strongly connected components in \BPAszo{}, the
property that all alive exit states of a strongly connected component
have the same sets of normed exit transitions. This property may fail
for strongly connected components in \PAszo{}: the automaton in
Figure~\ref{fig:BPAszoprecPAszoexample} is \PAszo-expressible, but the
alive exit states $p_0$ and $p_1$ of the strongly connected component
  $\{p_0, p_1\}$
have different normed exit transitions. Note, however, that these
normed exit transitions both end up in another strongly connected component
  $\{p_2, p_3\}$.
It turns out that we can relax the requirement on normed exit transitions
from strongly connected components in \BPAszo{} to get a requirement that
holds for strongly connected components in \PAszo{}. The idea is to
identify exit transitions if they have the same action and end up
in the same strongly connected component.

\begin{definition}\label{def:eqv-exit}
  Let $\TSS=(S,\stepsym,\termsym)$ be an $\Act$-labelled transition
  system space.
  We define a binary relation $\sim$ on $\Act\times S$
  by
    $(a,s) \sim (a',s')$
  iff
    $a=a'$
  and $s$ and $s'$ are in the same strongly connected component in
  $\TSS{}$.
\end{definition}

Since the relation of being in the same strongly connected component
is an equivalence on states in a transition system space, it is clear
that $\sim$ is an equivalence relation on exit transitions.
The following lemma will give some further properties of the relation
$\sim$ associated with \PAszo{}.

\begin{lemma}\label{lem:etcompatibility}
  Let $p$ and $q$ be \PAszo{} expressions, and let $a$ and $b$ be
  actions.
  If $(a,p)\sim(b,q)$, then
    $(a,p\seqc r)\sim(b,q\seqc r)$,
    $(a,p\parc r)\sim(b,q\parc r)$, and
    $(a,r\parc p)\sim(b,r\parc q)$.
\end{lemma}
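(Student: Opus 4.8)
The plan is to prove \Lemma{etcompatibility} directly from Definition~\ref{def:eqv-exit}, which reduces the claim to showing that ``being in the same strongly connected component'' is preserved by the three syntactic operations $\_\seqc r$, $\_\parc r$ and $r\parc\_$. Since the action components are untouched by these operations, it suffices to argue: if $p$ and $q$ lie in the same strongly connected component of \PAszo{}, then so do $p\seqc r$ and $q\seqc r$, and likewise $p\parc r$ and $q\parc r$, and $r\parc p$ and $r\parc q$. By the definition of strongly connected component, $p$ and $q$ being in the same one means $p\steprtc{}q$ and $q\steprtc{}p$; by symmetry of the claim it then suffices to show that $p\steprtc{}q$ implies $p\seqc r\steprtc{}q\seqc r$, $p\parc r\steprtc{}q\parc r$, and $r\parc p\steprtc{}r\parc q$ — and then apply this in both directions to conclude the two expressions are in the same strongly connected component (maximality then forces it to be literally the same component).

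First I would establish the sequential-composition case: a straightforward induction on the length of the transition sequence witnessing $p\steprtc{}q$, using operational rule~\ref{osrule:seqcstepl} at each step to lift a transition $p_i\step{a}p_{i+1}$ to $p_i\seqc r\step{a}p_{i+1}\seqc r$. For the parallel cases I would argue similarly, using rule~\ref{osrule:merge1} to lift $p_i\step{a}p_{i+1}$ to $p_i\parc r\step{a}p_{i+1}\parc r$, and rule~\ref{osrule:merge2} for $r\parc p_i\step{a}r\parc p_{i+1}$. Each of these is a routine induction on transition-sequence length with the base case trivial (the empty sequence) and the inductive step an application of the relevant SOS rule followed by the induction hypothesis.

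Having obtained, for each of the three constructions, that $p$ and $q$ in the same strongly connected component implies the corresponding compound expressions are mutually reachable from each other, I would conclude that those compound expressions lie in one common strongly connected component: mutual reachability places them in a common strongly connected subset, and the maximal such subset containing them is by definition their strongly connected component, which is therefore shared. Combined with $a=b$ (immediate, since the relation $\sim$ only relates pairs with equal first component and the operations do not alter it), Definition~\ref{def:eqv-exit} then yields $(a,p\seqc r)\sim(b,q\seqc r)$ and the two parallel analogues. I do not expect any genuine obstacle here: the lemma is a compatibility/congruence-flavoured statement whose only content is that the SOS rules for $\seqc$ and $\parc$ propagate transitions of the relevant argument, and the mild subtlety — that mutual reachability in \PAszo{} really does confine two states to a single strongly connected component — is already implicit in Definition~\ref{def:scc} and the remarks following it. The closest thing to a pitfall is remembering to argue reachability in both directions so that one concludes ``same component'' rather than merely ``reachable from one another''; the symmetry of the hypothesis $(a,p)\sim(b,q)$ makes this immediate.
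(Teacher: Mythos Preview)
Your proposal is correct: the lemma is exactly the compatibility statement you describe, and your reduction to showing that $p\steprtc{}q$ implies $p\seqc r\steprtc{}q\seqc r$, $p\parc r\steprtc{}q\parc r$, and $r\parc p\steprtc{}r\parc q$ via rules~\ref{osrule:seqcstepl}, \ref{osrule:merge1}, and~\ref{osrule:merge2} (each by a routine induction on the length of the witnessing transition sequence) is the natural argument. The paper, being an extended abstract, omits the proof of this lemma entirely, so there is no alternative approach to compare against; your write-up is precisely what one would expect the full version to contain.
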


To formulate a straightforward corollary of this lemma we use the following
notation: if $E$ is a set of exit transitions $E$ and $p$ is a
\PAszo{} expression, then $E\seqc p$, $E\parc p$ and $p\parc E$ are
defined by
\begin{gather*}
  E \parc p = \{ (a,q \parc p) \mid (a,q) \in E \}\enskip,\ \text{and}\ 
  p \parc E = \{ (a,p \parc q) \mid (a,q) \in E \}\enskip.
\end{gather*}

We are now in a position to establish a property of strongly connected
components in \PAszo{} that will allow us to prove that \PAszo{} is
less expressive than \ACPszo{}: a strongly connected component $C$ in
\PAszo{} always has a special exit state from which, up to $\sim$, all
exit transitions are enabled.


\begin{lemma}\label{lem:parcscc}
  Let $C_1$ and $C_2$ be sets of \PAszo{} expressions.  Then $C_1\parc
  C_2$ is a strongly connected component iff both $C_1$ and $C_2$ are
  strongly connected components. Moreover, $C_1\parc C_2$ is
  non-trivial iff at least one of $C_1$ and $C_2$ is non-trivial.
\end{lemma}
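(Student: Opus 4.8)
The plan is to prove both directions by exploiting the decomposition of reachable states of a parallel composition provided by Lemma~\ref{lem:parcsteps}. For the first direction, suppose $C_1\parc C_2$ is a strongly connected component. Pick any $p,p'\in C_1$; I need $p\steprtc{}p'$. Fix some $q\in C_2$ (both $C_1$ and $C_2$ are nonempty since $C_1\parc C_2$ is, being an SCC, nonempty). Then $p\parc q$ and $p'\parc q$ both lie in $C_1\parc C_2$, so $p\parc q\steprtc{}p'\parc q$. By Lemma~\ref{lem:parcsteps} this transition sequence decomposes: there are $p''$ and $q''$ with $p'\parc q = p''\parc q''$, $p\steprtc{}p''$ and $q\steprtc{}q''$. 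Since $\parc$ is a syntactic constructor, $p'\parc q = p''\parc q''$ forces $p''=p'$, hence $p\steprtc{}p'$. The argument for $C_2$ is symmetric. It remains to check that $C_1$ and $C_2$ are \emph{maximal} with this property; this follows because any state reachable-both-ways from $p\in C_1$ can be combined with a fixed $q\in C_2$ to produce a state reachable-both-ways from $p\parc q$ inside the SCC $C_1\parc C_2$, so by maximality of the latter it must already be an element of $C_1\parc C_2$, and then (again by uniqueness of the $\parc$-decomposition) its left component lies in $C_1$.

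For the converse, suppose $C_1$ and $C_2$ are both strongly connected components. Given $p\parc q$ and $p'\parc q'$ in $C_1\parc C_2$, I have $p\steprtc{}p'$ (both in $C_1$) and $q\steprtc{}q'$ (both in $C_2$); interleaving these two sequences using rules~\ref{osrule:merge1} and~\ref{osrule:merge2} yields $p\parc q\steprtc{}p'\parc q'$. So $C_1\parc C_2$ is a set of states all mutually reachable. For maximality I argue as follows: if $r$ is reachable-both-ways from $p\parc q$, then in particular $p\parc q\steprtc{}r$, so by Lemma~\ref{lem:parcsteps} $r=p'\parc q'$ with $p\steprtc{}p'$ and $q\steprtc{}q'$; applying the same lemma to the return path $r\steprtc{}p\parc q$ gives $p'\steprtc{}p$ and $q'\steprtc{}q$. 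Hence $p'$ is mutually reachable with $p$, so $p'\in C_1$ by maximality of $C_1$, and likewise $q'\in C_2$, so $r\in C_1\parc C_2$. This establishes that $C_1\parc C_2$ is itself a strongly connected component.

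For the triviality claim, observe that $C_1\parc C_2=\{p\parc q\}$ with $p\parc q\notstep{}p\parc q$ holds iff $C_1=\{p\}$, $C_2=\{q\}$, and neither $p\step{}p$ nor $q\step{}q$ (any self-loop on $p$ or $q$ would lift, via rule~\ref{osrule:merge1} or~\ref{osrule:merge2}, to a self-loop on $p\parc q$; conversely a self-loop on $p\parc q$ would, by Lemma~\ref{lem:parcsteps}, come from a move of one component that returns to the same state, but as the other component is fixed this move lands back at $p\parc q$ and so forces the moving component to have a self-loop). Thus $C_1\parc C_2$ is trivial iff both $C_1$ and $C_2$ are trivial, which is the contrapositive of the stated equivalence. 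The main obstacle is getting the maximality arguments exactly right --- in both directions one must carefully invoke Lemma~\ref{lem:parcsteps} on \emph{both} legs of the round trip and use the unique readability of the syntactic form $p\parc q$; the remaining steps are routine interleaving and case analysis.
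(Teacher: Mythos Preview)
The paper, being an extended abstract, does not include a proof of this lemma, so there is nothing to compare your argument against. Your proof is correct and follows the natural route: use Lemma~\ref{lem:parcsteps} to decompose transition sequences of a parallel composition, use rules~\ref{osrule:merge1} and~\ref{osrule:merge2} to compose them, and use unique readability of the syntactic form $p\parc q$ to recover the components. Both the mutual-reachability parts and the maximality parts are handled correctly in each direction.

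One small imprecision worth tightening: in the triviality argument you invoke Lemma~\ref{lem:parcsteps} to analyse a \emph{single} step $p\parc q\step{}p\parc q$, but that lemma is stated for $\steprtc{}$. The cleaner justification here is simply that in \PAszo{} the only operational rules deriving a transition from a parallel composition are rules~\ref{osrule:merge1} and~\ref{osrule:merge2} (rule~\ref{osrule:comm} is excluded), so any step $p\parc q\step{a}r$ necessarily has $r=p'\parc q$ with $p\step{a}p'$ or $r=p\parc q'$ with $q\step{a}q'$; combined with unique readability this immediately gives the self-loop on the moving component. This is clearly what you intend, so there is no real gap.
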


\begin{lemma}\label{lem:parcaes2-PA}
  Let $C_1$ and $C_2$ be a strongly connected components in \PAszo{},
  both with alive exit states.  Then $C_1\parc C_2$ is a strongly
  connected component with alive exit states too, and,
  for all $p \in C_1$ and $q \in C_2$,
  $\Extn{p \parc q} = (\Extn{p} \parc q) \cup (p \parc \Extn{q})$.
\end{lemma}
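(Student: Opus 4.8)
The plan is to dispatch the structural claims quickly and then compute $\Extn{p\parc q}$ by a direct analysis of the operational rules for parallel composition. That $C_1\parc C_2$ is a strongly connected component is immediate from Lemma~\ref{lem:parcscc}. The key preliminary observation is that a strongly connected component possessing an alive exit state consists entirely of normed states: if $s$ is an alive exit state, then either $\term{s}$, or there is a normed exit transition $(a,s')$ from $s$ with $s'$ normed, so in both cases $s$ is normed; and since in a strongly connected component every state reaches $s$, every state of the component is normed. Applying this to $C_1$ and to $C_2$, every $p\in C_1$ and every $q\in C_2$ is normed.

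Next I would analyse the transitions out of $p\parc q$. Since we work in \PAszo{}, the only rules producing such transitions are \ref{osrule:merge1} and \ref{osrule:merge2}, so every transition $p\parc q\step{a}r$ has either $r=p'\parc q$ with $p\step{a}p'$, or $r=p\parc q'$ with $q\step{a}q'$. Using the unique decomposition of parallel compositions together with $q\in C_2$, one gets $p'\parc q\in C_1\parc C_2$ iff $p'\in C_1$; hence $p'\parc q$ lies outside $C_1\parc C_2$ exactly when $(a,p')$ is an exit transition from $p$ in $C_1$, and symmetrically on the right. For normedness I would invoke Lemma~\ref{lem:parcsteps} together with rule \ref{osrule:merge3}: every state reachable from $p'\parc q$ has the form $p''\parc q''$ with $p'\steprtc{}p''$, $q\steprtc{}q''$, and it terminates iff both $p''$ and $q''$ do; hence $p'\parc q$ is normed iff both $p'$ and $q$ are normed. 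As $q$ is always normed by the preliminary observation, $p'\parc q$ is normed iff $p'$ is.

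Combining these facts, a normed exit transition of $p\parc q$ arising from the left component is precisely a pair $(a,p'\parc q)$ with $p\step{a}p'$, $p'\notin C_1$ and $p'$ normed, i.e.\ with $(a,p')\in\Extn{p}$; so the left-hand normed exit transitions of $p\parc q$ are exactly $\Extn{p}\parc q$, and symmetrically the right-hand ones are $p\parc\Extn{q}$, giving $\Extn{p\parc q}=(\Extn{p}\parc q)\cup(p\parc\Extn{q})$ for all $p\in C_1$, $q\in C_2$. Finally, to see that $C_1\parc C_2$ itself has an alive exit state, I would pick alive exit states $p_0\in C_1$ and $q_0\in C_2$ and examine $p_0\parc q_0\in C_1\parc C_2$ by cases: if $p_0$ (resp.\ $q_0$) has a normed exit transition, then the displayed formula supplies one for $p_0\parc q_0$; otherwise $\term{p_0}$ and $\term{q_0}$, hence $\term{p_0\parc q_0}$ by \ref{osrule:merge3}, so $p_0\parc q_0$ is an alive exit state.

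The only real obstacle is keeping the normedness bookkeeping straight: one has to check that the hypothesis ``$C_i$ has an alive exit state'' is genuinely used --- it is what forces the spectator component ($q$ on the left, $p$ on the right) to be normed, without which $p'\parc q$ could fail to be normed even when $p'$ is --- and one must remember that $\Extn{-}$ on the left-hand side of the identity is read with respect to the ambient component $C_1\parc C_2$. Everything else is a routine case split on the two merge rules plus appeals to Lemmas~\ref{lem:parcscc} and~\ref{lem:parcsteps}.
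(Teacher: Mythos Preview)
The paper is an extended abstract and does not include a proof of this lemma, so there is nothing to compare against directly. Your argument is correct and is precisely the natural direct verification one would expect: invoke Lemma~\ref{lem:parcscc} for the SCC claim, observe that an SCC containing an alive exit state is entirely normed, then case-split on the two \PAszo{} merge rules and use syntactic uniqueness of $\parc$-decomposition together with Lemma~\ref{lem:parcsteps} to characterise exit and normedness componentwise. Your remark about where the hypothesis ``$C_i$ has an alive exit state'' is actually used (to make the spectator component normed) is exactly the point that needs care, and you handle it properly.
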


To formulate the special property of strongly connected components in
\PAszo{} that will allow us to prove that some \ACPszo{} expressions
do not have a counterpart in \PAszo{}, we need the notion of maximal
alive exit state.
\begin{definition}
  Let $\TSS{}=(S,\stepsym,\termsym)$ be an $\Act$-labelled transition
  system space, let ${\sim}\subseteq\Act\times S$ be the equivalence
  relation associated with $\TSS{}$ according to
  Definition~\ref{def:eqv-exit}, let $C$ be a strongly connected
  component in $\TSS{}$, and let $s\in C$ be an alive exit state.
  We say that $s$ is \emph{maximal} (modulo $\sim$) if for all alive exit
  states $s'\in C$ and for all $e'\in\Extn{s'}$ there exists an exit
  transition $e\in\Extn{s}$ such that $e\sim e'$.
\end{definition}

The following proposition establishes the property with which we shall
prove that \PAszo{} is less expressive than \ACPszo{}.

\begin{proposition}\label{prop:aes-same-taet}
  If $C$ is a strongly connected component in \PAszo{} and $C$ has an
  alive exit state, then $C$ has a maximal alive exit state.
\end{proposition}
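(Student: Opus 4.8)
The plan is to prove Proposition~\ref{prop:aes-same-taet} by induction on the structure of non-trivial strongly connected components in \PAszo{} as given by Proposition~\ref{prop:sccinduction-PA}, with the trivial strongly connected component as an easy base case. For a trivial strongly connected component $C=\{s\}$ with $s\notstep{}s$: if $C$ has an alive exit state it must be $s$ itself, which is then automatically maximal since it is the only exit state. For a non-trivial $C$ we distinguish the three cases of Proposition~\ref{prop:sccinduction-PA}.

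If $C$ is basic, then by Lemma~\ref{lem:bsccnoextl} $\Extn{p}=\emptyset$ for all $p\in C$, so every alive exit state is terminating and every alive exit state is trivially maximal (the quantifier over $\Extn{s'}$ is vacuous, and being terminating is enough to be an alive exit state). If $C=C'\seqc q$ with $C'$ a non-trivial strongly connected component, I would first observe that $C$ has an alive exit state exactly when $C'$ does and $q$ is normed: this is Lemma~\ref{lem:seqcaes}. By the induction hypothesis $C'$ has a maximal alive exit state $p'$; I claim $p'\seqc q$ is maximal in $C$. Given any alive exit state $p_2\seqc q$ of $C$ and any $e'\in\Extn{p_2\seqc q}$, I use the characterisation of $\Extn{p_2\seqc q}$ from Lemma~\ref{lem:seqcaes2}. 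If $e'\in\Extn{p_2}\seqc q$, then $e'=(a,r\seqc q)$ with $(a,r)\in\Extn{p_2}$; by maximality of $p'$ there is $(a,\hat r)\in\Extn{p'}$ with $(a,r)\sim(a,\hat r)$, and then $(a,\hat r\seqc q)\in\Extn{p'}\seqc q\subseteq\Extn{p'\seqc q}$ with $(a,r\seqc q)\sim(a,\hat r\seqc q)$ by Lemma~\ref{lem:etcompatibility}. If instead $e'$ comes from the second set in Lemma~\ref{lem:seqcaes2} (so $\term{p_2}$ and $e'$ arises from a transition of $q$), then since $p'$ is an alive exit state and $\term{p'}$ iff $\term{p_2}$ — which is part of what the induction hypothesis for Proposition~\ref{prop:aes-same-aet}-style reasoning gives, or which I would fold into the induction hypothesis here — we get $\term{p'}$, and the same set of $q$-generated exit transitions appears in $\Extn{p'\seqc q}$, so $e'$ itself is available (up to equality, hence up to $\sim$).

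If $C=C_1\parc C_2$ with at least one of $C_1,C_2$ non-trivial, then by Lemma~\ref{lem:parcscc} both $C_1$ and $C_2$ are strongly connected components, and since $C$ has an alive exit state one checks (using Lemma~\ref{lem:parcaes2-PA} and the fact that termination of $p\parc q$ requires termination of both) that both $C_1$ and $C_2$ have alive exit states. By the induction hypothesis let $p_1$ be maximal alive in $C_1$ and $p_2$ maximal alive in $C_2$; I claim $p_1\parc p_2$ is maximal alive in $C_1\parc C_2$. Given any alive exit state $q_1\parc q_2$ and $e'\in\Extn{q_1\parc q_2}$, Lemma~\ref{lem:parcaes2-PA} gives $\Extn{q_1\parc q_2}=(\Extn{q_1}\parc q_2)\cup(q_1\parc\Extn{q_2})$. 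In the first case $e'=(a,r\parc q_2)$ with $(a,r)\in\Extn{q_1}$; maximality of $p_1$ gives $(a,\hat r)\in\Extn{p_1}$ with $(a,r)\sim(a,\hat r)$, hence $(a,\hat r\parc p_2)\in\Extn{p_1}\parc p_2\subseteq\Extn{p_1\parc p_2}$ and $(a,r\parc q_2)\sim(a,\hat r\parc p_2)$ by applying Lemma~\ref{lem:etcompatibility} twice (once for the $\seqc$-to-$\parc$ shift on the left component and once composing with the right component); the second case is symmetric.

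The main obstacle I anticipate is bookkeeping in the sequential-composition case: to match an exit transition of $q_1\parc q_2$ that lives in the "$q$-part" of $\Extn{p_2\seqc q}$, I need the chosen maximal alive exit state $p'$ of $C'$ to agree on termination with the arbitrary alive exit state $p_2$. Since Proposition~\ref{prop:aes-same-aet} already establishes that all alive exit states of a non-trivial strongly connected component in \BPAszo{} agree on termination — and the analogous statement for \PAszo{} is what is needed — I would either invoke (or re-prove for \PAszo{}) that all alive exit states of a non-trivial strongly connected component have the same termination status, and bundle "maximal and $\term{\_}$-consistent with all alive exit states" into the induction hypothesis. With that in hand, the three inductive cases go through by the computations sketched above, and the only genuinely new ingredient beyond Section~\ref{sec:bpa-pa} is the use of $\sim$ via Lemma~\ref{lem:etcompatibility} to absorb the mismatch between parallel components that Proposition~\ref{prop:aes-same-aet} could not tolerate.
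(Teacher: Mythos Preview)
Your overall strategy---induction on the structure of strongly connected components as given by Proposition~\ref{prop:sccinduction-PA}, with the trivial case as base---is the right one and matches how the paper is set up; your treatment of the basic and parallel cases is essentially correct. The sequential case, however, contains a real gap. You propose to close it by invoking (or re-proving for \PAszo{}) that \emph{all alive exit states of a non-trivial strongly connected component have the same termination status}. That statement is false in \PAszo{}. Take $C_1=\{a\altc\emp\}$ (a trivial strongly connected component whose single element terminates and has the normed exit transition $(a,\emp)$) and let $C_2$ be the two-state non-trivial strongly connected component reachable from $(b\seqc c)\uks$, in which one state terminates and the other does not, and neither has an exit transition. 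Then $C=C_1\parc C_2$ is a non-trivial strongly connected component with two alive exit states: both inherit a normed $a$-exit from the left component, but only one of them terminates. Such a $C$ is precisely the kind of $C'$ that may occur in the step $C=C'\seqc q$, and if the maximal alive exit state you pick in $C'$ happens to be the non-terminating one, you cannot match a $q$-generated normed exit transition arising from a terminating $p_2$.

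The repair is to strengthen the induction hypothesis differently: prove that $C$ has a maximal alive exit state $p'$ such that, in addition, $\term{p'}$ holds whenever \emph{some} alive exit state of $C$ terminates. This strengthened statement does carry through all cases. In the basic case every alive exit state terminates, so it is immediate. In the parallel case, if some $q_1\parc q_2\in C_1\parc C_2$ terminates then both $q_i$ terminate, hence by the strengthened hypothesis each $p_i$ can be chosen maximal \emph{and} terminating, so $p_1\parc p_2$ terminates; maximality then follows by your argument using Lemma~\ref{lem:parcaes2-PA} and two applications of Lemma~\ref{lem:etcompatibility}. In the sequential case the strengthened hypothesis for $C'$ delivers exactly the $\term{p'}$ you need to place the $q$-generated exit transitions into $\Extn{p'\seqc q}$ via Lemma~\ref{lem:seqcaes2}.
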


\begin{figure}[htb]
 \begin{center}
  \begin{transsys}(30,17)
   \graphset{iangle=90,fangle=180}
   \node[Nmarks=i](p0)(15,15){$p_0$}
   \node(p1)(30,15){$p_1$}
   \node(p4)(0,15){$p_4$}
   \node(p2)(15,0){$p_2$}
   \node(p3)(30,0){$p_3$}
   \node[Nmarks=f](p5)(0,0){$p_5$}
   {\small
    {\graphset{curvedepth=2}
     \edge(p0,p1){$a$}\edge(p1,p0){$b$}
     \edge(p2,p3){$a$}\edge(p3,p2){$b$}
    }
    \edge(p0,p4){$d$}
    \edge(p2,p5){$d$}
    \edge(p0,p2){$c$}
    \edge(p1,p3){$c$}
    \edge(p4,p5){$c$}
    \edge[ELpos=33](p1,p2){$e$}
   }
  \end{transsys}
 \end{center}
 \caption{An \ACPszo{}-expressible automaton that is not expressible in \PAszo{}.}\label{fig:PAszoprecACPszoexample}
\end{figure}

Suppose $\gamma(b,c)=e$;
then the \ACPszo{} expression
  $p_0=\emp\seqc (a \seqc b)\uks \seqc d \parc c$
gives rise to the automaton shown in
Figure~\ref{fig:PAszoprecACPszoexample}. It has a strongly connected
component $C=\{p_0,p_1\}$, and none of its alive exit states is
maximal. Hence, by Proposition~\ref{prop:aes-same-taet}, $p_0$ is not
\PAszo{}-expressible.
\begin{theorem}\label{thm:pa-acp}
  $\PAszo$ is less expressive than $\bigcup_{\gamma}\ACPszo$.
\end{theorem}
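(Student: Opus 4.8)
The plan is to establish Theorem~\ref{thm:pa-acp} by combining the two directions required by Definition~\ref{def:expressive}. For the easy direction, $\PAszo\lexp\bigcup_\gamma\ACPszo$, I would observe that every \PAszo{} expression is literally an \ACPszo{} expression (with the communication function $\gamma$ chosen to be everywhere undefined, so that rules~\ref{osrule:comm}--\ref{osrule:encap2} never fire on it), and that the operational semantics of \PAszo{} was defined precisely as the restriction of $\stepsym[{\ACPszo[\Act,\emptyset]}]$ to $\PAszoExp$. Hence every state of $\PAszo$ is not merely bisimilar to, but identical to, a state of $\ACPszo[\Act,\emptyset]\subseteq\bigcup_\gamma\ACPszo$. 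So every state of \PAszo{} is bisimilar to a state of $\bigcup_\gamma\ACPszo$.

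For the hard direction I would exhibit the witness already displayed just before the theorem statement. Fix a communication function $\gamma$ with $\gamma(b,c)=e$ (and otherwise as needed), and consider the \ACPszo{} expression $p_0=\emp\seqc(a\seqc b)\uks\seqc d\parc c$. First I would verify, by unwinding the operational rules, that the automaton associated with $p_0$ is (isomorphic to) the one in Figure~\ref{fig:PAszoprecACPszoexample}: the parallel component $c$ can be consumed at any point, either alone via rule~\ref{osrule:merge2} (yielding the $c$-transitions $p_0\step{c}p_2$ and $p_1\step{c}p_3$ and $p_4\step{c}p_5$), or in synchronisation with the $b$ of the left component via rule~\ref{osrule:comm} (yielding the single $e$-transition $p_1\step{e}p_2$); meanwhile the left component alone drives the $a$- and $b$-transitions around the two cycles $\{p_0,p_1\}$ and $\{p_2,p_3\}$, the $d$-transitions $p_0\step{d}p_4$ and $p_2\step{d}p_5$ (reachable only once $c$ has not yet been consumed, from a state where the left component has terminated its Kleene body and is about to perform $d$), and $p_5$ is the unique terminating state. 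Then I would check that $C=\{p_0,p_1\}$ is indeed a non-trivial strongly connected component: $p_0\step{a}p_1\step{b}p_0$, and no other state in Figure~\ref{fig:PAszoprecACPszoexample} lies on a path back into $C$.

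The crux is then to show that neither $p_0$ nor $p_1$ is a maximal alive exit state of $C$, so that by (the contrapositive of) Proposition~\ref{prop:aes-same-taet} no state bisimilar to $p_0$ can occur in \PAszo{}. Both $p_0$ and $p_1$ are alive exit states: $p_0$ has the normed exit transitions $(d,p_4)$ and $(c,p_2)$ (and $p_4$ is normed since $p_4\step{c}p_5$ with $\term{p_5}$), while $p_1$ has the normed exit transitions $(c,p_3)$ and $(e,p_2)$. Now $p_0$ is not maximal because $p_1$ has the exit transition $(e,p_2)$ and $p_0$ has no $e$-transition at all, hence no exit transition $\sim$-equivalent to $(e,p_2)$ — here I must confirm from the picture that $p_0$ genuinely has no outgoing $e$-labelled edge. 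Symmetrically $p_1$ is not maximal because $p_0$ has the exit transition $(d,p_4)$, $p_1$ has no $d$-transition, and no other exit transition of $p_1$ is $\sim$-related to $(d,p_4)$ (the targets $p_2,p_3,p_4$ lie in pairwise distinct strongly connected components). Since $C$ has an alive exit state but no maximal one, Proposition~\ref{prop:aes-same-taet} shows $p_0$ is not \PAszo{}-expressible, which together with the easy direction yields $\PAszo\lexp\bigcup_\gamma\ACPszo$.

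I expect the main obstacle to be the careful bookkeeping in the first step of the hard direction: correctly computing the reachable set of $p_0$ and all its transitions directly from Table~\ref{tab:ACPszoSOS}, in particular getting right the interplay between the $\emp\seqc(\cdot)$ prefix, the unfolding of the Kleene star $(a\seqc b)\uks\step{a}b\seqc(a\seqc b)\uks$ via rule~\ref{osrule:starstep}, the always-available termination $\term{(a\seqc b)\uks}$ via rule~\ref{osrule:starterm} that licenses the $d$-transition, and the three ways the spectator component $c$ can be discharged (left alone, right alone via~\ref{osrule:merge2}, or by communication via~\ref{osrule:comm}); once the automaton is pinned down, the remaining verification that no alive exit state of $\{p_0,p_1\}$ is maximal is a short finite check.
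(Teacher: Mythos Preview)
Your proposal is correct and follows essentially the same approach as the paper: the easy inclusion is handled by the observation (made explicitly in Section~\ref{sec:prelim}) that $\stepsym[\PAszo]$ is the restriction of $\stepsym[{\ACPszo[\Act,\emptyset]}]$ to $\PAszoExp$, and the strictness is witnessed by exactly the expression $p_0=\emp\seqc(a\seqc b)\uks\seqc d\parc c$ with $\gamma(b,c)=e$, whose strongly connected component $\{p_0,p_1\}$ has alive exit states but no maximal one, contradicting Proposition~\ref{prop:aes-same-taet}. Your write-up is in fact more detailed than the paper's (which, being an extended abstract, only states the theorem after the informal discussion of Figure~\ref{fig:PAszoprecACPszoexample}); the one step you leave implicit, as does the paper, is the use of Lemma~\ref{lem:scc-bisim} to transport the absence of a maximal alive exit state across a putative bisimulation into \PAszo{}.
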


\section[Every Finite Automaton is ACP*01-expressible]{Every Finite Automaton is  $\ACPszo$-expressible}\label{sec:LPS-ACP}

\newcommand{\enter}[1]{\ensuremath{\mathit{enter}_{#1}}}
\newcommand{\leave}[2]{\ensuremath{\mathit{leave}_{#1,#2}}}
\newcommand{\ActC}{\ensuremath{\mathalpha{C}}}

Milner observed in~\cite{Mil84} that there exist \nfas{} that are not
bisimilar to the \nfa{} associated with a \BPAszo{} expression. Our
proof of Theorem~\ref{thm:bpa-pa} has Milner's observation as an
immediate consequence: the \nfa{} associated with the \PAszo{}
expression used in the proof is not \BPAszo{}-expressible.  Similarly,
by Theorem~\ref{thm:pa-acp}, there are \nfas{} that are not
expressible in \PAszo{}.

In this section we shall prove that every \nfa{} \emph{is} expressible
in \ACPszo{}, for suitable choices of $\Act$ and $\gamma$, even up to
isomorphism. Before we formally prove the result, let us first
explain the idea informally, and illustrate it with an example.
The \ACPszo{} expression that we shall associate with a \nfa{} will
have one parallel component per state of the automaton,
representing the behaviour in that state (i.e., which outgoing
transitions it has to which other states and whether it is
terminating). At any time, one of those parallel components, the one
corresponding with the ``current state,'' has control. An
$a$-transition from that current state to a next state corresponds
with a communication between two components.  We make essential use of
\ACPszo{}'s facility to let the action $a$ be the result of
communication.

\begin{example}\label{ex:LPS}
  Consider the \nfa{} in Figure~\ref{fig:LPS}.

  \begin{figure}[htb]
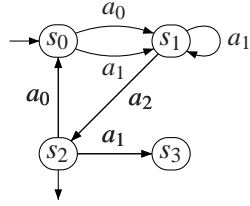

    \begin{center}
     \begin{transsys}(30,18)
      \graphset{iangle=180,fangle=-90}
      \node[Nmarks=i](p0)(0,15){$s_0$}
      \node(p1)(15,15){$s_1$}
      \node[Nmarks=f](p2)(0,0){$s_2$}
      \node(p3)(15,0){$s_3$}
      {\small
       \edge*[loopangle=0](p1){$a_1$}
       \edge[curvedepth=2](p0,p1){$a_0$}
       \edge[curvedepth=-2,ELside=r](p0,p1){$a_1$}
       \edge[ELpos=40](p1,p2){$a_2$}
       \edge(p2,p0){$a_0$}
       \edge(p2,p3){$a_1$}
       \edge[ELpos=40](p1,p2){$a_2$}
       \edge(p2,p0){$a_0$}
       \edge(p2,p3){$a_1$}
      }
     \end{transsys}
    \end{center}
   \caption{A finite automaton.}\label{fig:LPS}
  \end{figure}

  We associate with every state $s_i$ an \ACPszo{} expression
  $p_i$ as follows:
  \begin{equation*}
  \begin{array}{ll}
    p_0 =
       \Bigr(\enter{0}\seqc(\leave{0}{1}\altc\leave{1}{1})\Bigr)\uks
  \enskip,
&
    p_2 =
      \Bigr(\enter{2}\seqc(\leave{0}{0}\altc\leave{1}{3}\altc\emp)\Bigr)\uks
  \enskip,\\
    p_1 =
     \Bigr(\enter{1}\seqc a_1\uks\seqc(\leave{2}{2})\Bigr)\uks
  \enskip,
&
    p_3 =
      \Bigr(\enter{3}\seqc\dl\Bigr)\uks
  \enskip.
  \end{array}
  \end{equation*}
  Every $p_i$ has an $\enter{i}$ transition to gain control, and by
  executing a $\leave{k}{j}$ it may then release control to $p_j$
  with action $a_k$ as effect.
  We define the communication function so that an $\enter{i}$ action
  communicates with a $\leave{k}{i}$ action, resulting in the action $a_k$.
  Loops in the automaton (such as the loop on state $s_1$) require
  special treatment as they should not release control.

  Let $p_0'$ be the result of executing the $\enter{0}$-transition
  from $p_0$.
  We define the \ACPszo{} expression that simulates the \nfa{} in
  Figure~\ref{fig:LPS} as the parallel composition of $p_0'$, $p_1$,
  $p_2$ and $p_3$, encapsulating the control actions $\enter{i}$ and
  $\leave{k}{i}$, i.e., as
  \begin{equation*}
    \encap{\{\enter{i},\leave{k}{i}\mid 0\leq i \leq 3,\
                                        0\leq k \leq 2\}}{%
               p_0'\parc p_1\parc p_2 \parc p_3
             }
  \enskip.
  \end{equation*}
\end{example}

We now present the technique illustrated in the preceding example in
full generality.
Let
  $\FA{}=(S,\stepsym{},s_0,\termsym)$
be a \nfa{},
let
  $S=\{s_0,\dots,s_n\}$,
and let $A=\{a_1,\dots,a_m\}$ be the set of actions occurring on
transitions in $\FA{}$.
We shall associate with $\FA$ an \ACPszo{} expression $p_{\FA}$ that
has precisely one parallel component $p_i$ for every state $s_i$ in
$S$. To allow a parallel component to gain and release control,
we use a collection of \emph{control actions} $\ActC$, assumed to be
disjoint from $A$, and defined as
\begin{equation*}
  \ActC=\{\enter{i}\mid 1\leq i \leq n\}
      \cup
    \{\leave{k}{i}\mid 1\leq i \leq n,\ 1\leq k \leq m\}
\enskip.
\end{equation*}
Gaining and releasing control is modelled by the communication
function $\gamma$ satisfying:
\begin{equation*}
  \gamma(\enter{i},\leave{k}{j})=
    \left\{\begin{array}{ll}
      a_k & \text{if $i=j$; and}\\
      \text{undefined} & \text{otherwise.}
    \end{array}\right.
\end{equation*}
For the specification of the \ACPszo{} expressions $p_i$ we need one
more definition: for $1\leq i,j \leq n$ we denote by $K_{i,j}$ the set
of indices of actions occurring as the label on a transition from
$s_i$ to $s_j$, i.e.,
\begin{equation*}
  K_{i,j}=\{k\mid s_i\step{a_k}s_j\}
\enskip.
\end{equation*}
Now we can specify the \ACPszo{} expressions $p_i$ ($1\leq i \leq n$)
by
\begin{equation*}
  p_i = \emp \seqc
        \Bigr(\enter{i}
                \seqc
              (\sum_{k \in K_{i,i}} a_k)\uks
                \seqc
              (\sum_{\substack{1 \leq j \leq n \\ j \neq i}}\
           \sum_{k \in K_{i,j}}
             \leave{k}{i}\ {(+\;\emp)}_{\term{s_i}})\Bigr)\uks \;.
\end{equation*}
By ${(+\;\emp)}_{\term{s_i}}$ we mean that the summand $+\;\emp$ is
optional; it is only included if $\term{s_i}$.
The empty summation denotes $\dl$.
(We let $p_i$ start with $\emp$ to get that the \nfa{}
associated with $p_{\FA{}}$ is isomorphic and not just bisimilar with
$\FA{}$.)

Note that, in \ACPszo{}, every $p_i$ has a unique outgoing transition;
specifically $p_i \step{\enter{i}} p_i'$,
where $p_i'$ denotes:
\begin{equation*}
  p_i' = (\emp \seqc(\sum_{k \in K_{i,i}} a_k)\uks
                \seqc
              (\sum_{\substack{0 \leq j \leq n \\ j \neq i}}\
           \sum_{k \in K_{i,j}}
             \leave{k}{i}\ {(+\;\emp)}_{\term{s_i}}))\seqc p_i \;.
\end{equation*}

We now define
  $p_{\FA{}}=\encap{\ActC}{p_0'\merge p_1\merge \cdots\merge p_n}$.
Clearly, the construction of $p_{\FA{}}$ works for every \nfa{}
$\FA{}$. The bijection defined by
    $s_i \mapsto \encap{\ActC}{p_0\merge\cdots\merge p_{i-1}
                                  \merge p_i'\merge
                               p_{i+1}\merge\cdots\merge p_n}$
  is an isomorphism from $\FA{}$ to the automaton associated with
$p_{\FA}$ by the operational semantics.
We shall refer to $p_{\FA{}}$ as the \ACPszo{} expression associated
with $\FA{}$.

\begin{theorem}
  Let $\FA{}$ be a \nfa{}, and let $p_{\FA{}}$ be its
  associated \ACPszo{} expression. The automaton associated
  with $p_{\FA{}}$ by the operational rules for \ACPszo{} is isomorphic
  to $\FA{}$.
\end{theorem}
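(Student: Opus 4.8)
The plan is to exhibit an explicit isomorphism, namely the candidate bijection already written down just before the theorem statement, and verify that it respects the initial state, the termination predicate, and the transition relation in both directions. Concretely, write $\phi(s_i)=\encap{\ActC}{p_0\merge\cdots\merge p_{i-1}\merge p_i'\merge p_{i+1}\merge\cdots\merge p_n}$, and let $r_i$ denote the argument of the encapsulation, so $r_i$ is the parallel composition in which exactly the $i$-th component carries control (i.e.\ has been ``entered''). First I would note that $\phi(s_0)$ is the initial state $p_{\FA{}}$ by definition, so the initial states match. Then I would observe that $\phi$ is injective: the $i$-th component determines which index $j$ has $p_j$ in the ``entered'' form $p_j'$ rather than the ``waiting'' form $p_j$, so distinct $s_i$ map to syntactically distinct expressions; and it is clearly surjective onto the set of states of the shape $\encap{\ActC}{\cdots}$ with exactly one component entered.

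The core of the proof is analysing the reachable state space of $p_{\FA{}}$ and matching transitions. The key structural claim I would establish is: every state reachable from $p_{\FA{}}$ is of the form $\encap{\ActC}{r}$ where $r = q_0\merge\cdots\merge q_n$, exactly one $q_j$ equals $p_j'$ and every other $q_k$ equals $p_k$; equivalently, every reachable state is $\phi(s_j)$ for a unique $j$. This is proved by induction along transition sequences, with the base case $p_{\FA{}}=\phi(s_0)$, using the following transition analysis at $\phi(s_j)$. Inside $r_j$ the only component that can move on its own is $p_j'$, and by the operational rules for sequential composition, star, and alternative composition, $p_j'$ can perform exactly the actions $\leave{k}{j}$ for $k\in K_{j,i}$ (any $i\neq j$), moving to the component that waits to re-enter; and it can perform $a_k$ for $k\in K_{j,j}$ via the inner star, returning to a state of the same control-form at $s_j$ (this handles self-loops without releasing control). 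A $\leave{k}{j}$ action is in the encapsulation set $\ActC$, so it cannot occur on its own; it must synchronise. By the definition of $\gamma$, the only action it communicates with is $\enter{i}$ for the unique $i$ with $i=j$... here one must be careful: $\gamma(\enter{i},\leave{k}{j})=a_k$ precisely when $i=j$. Re-reading the construction, $\leave{k}{i}$ is the action in the component for $s_i$ (the one currently in control), and it communicates with $\enter{j}$'s index... the matching is $\gamma(\enter{i},\leave{k}{i})=a_k$, so the entering action of a component communicates with a leave action bearing its own subscript. Thus from $\phi(s_i)$, the in-control component $p_i'$ offers $\leave{k}{i}$ for $k\in K_{i,j}$ (some $j$), the waiting component $p_j$ offers $\enter{j}$, but $\gamma(\enter{j},\leave{k}{i})$ is undefined unless $j=i$ --- so I must recheck: the intended design clearly is that $\leave{k}{i}$ targets $s_j$, so the action should be $\leave{k}{j}$ with $j$ the target; taking the construction as written with $\leave{k}{i}$ and $\gamma(\enter{i},\leave{k}{j})=a_k$ iff $i=j$, the communication is between component $s_i$ releasing via $\leave{k}{i}$ and component $s_j$ entering via $\enter{j}$ only when... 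In any case, I would take the construction at face value, work out the unique pairing it produces, and confirm it yields a synchronised transition labelled $a_k$ from $\phi(s_i)$ to $\phi(s_j)$ exactly when $s_i\step{a_k}s_j$ in $\FA{}$, with $K_{i,j}$ collecting precisely these $k$. Termination is handled by rule~\ref{osrule:merge3} together with rules~\ref{osrule:encap2}: $\encap{\ActC}{r_i}$ terminates iff every component of $r_i$ terminates; each waiting $p_k$ terminates (it is a star expression, rule~\ref{osrule:starterm}), the $\emp\seqc$ prefix terminates, and $p_i'$ terminates iff the optional $+\;\emp$ summand is present, which by construction happens iff $\term{s_i}$; hence $\term{\phi(s_i)}$ iff $\term{s_i}$.

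Assembling these observations: $\phi$ is a bijection between $S$ and the reachable states of $p_{\FA{}}$, by the structural claim; $\phi(s_0)$ is the initial state; $\phi(s_i)\step{a_k}\phi(s_j)$ iff $s_i\step{a_k}s_j$, by the transition analysis (noting all labels on transitions of $p_{\FA{}}$ lie in $A$ because every $\enter{}$ and $\leave{}{}$ action is encapsulated and every surviving communication result is some $a_k$); and $\term{\phi(s_i)}$ iff $\term{s_i}$. That is exactly the statement that $\phi$ is an isomorphism of automata. I expect the main obstacle to be the transition-matching bookkeeping: one must show there are \emph{no} spurious transitions --- in particular that an entered component cannot release control by an $a_k$ from its self-loop block while some other synchronisation is simultaneously possible, that a waiting component $p_k$ has no enabled transition other than $\enter{k}$ (so it cannot interfere), and that the encapsulation genuinely blocks every unsynchronised control action --- and that the self-loop star $(\sum_{k\in K_{i,i}}a_k)\uks$ returns to a state literally equal to $\phi(s_i)$ rather than to some variant of it, which requires unwinding the nested Kleene stars and the leading $\emp\seqc$ carefully using the operational rules for $\seqc$ and $\uks$. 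A secondary subtlety is confirming the ``exactly one component entered'' invariant is preserved: entering $p_j$ via the synchronisation simultaneously rewrites the previously-entered $p_i'$ back to (a $\seqc$-residual equal to) $p_i$, which again needs a short computation with rules~\ref{osrule:seqcstepr} and~\ref{osrule:starstep}.
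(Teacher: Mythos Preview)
Your approach matches the paper's exactly: the paper simply writes down the bijection $\phi$ and asserts (without further argument, this being an extended abstract) that it is an isomorphism, so your plan is a correct fleshing-out of that assertion. Your hesitation over $\leave{k}{i}$ reflects a genuine typo in the general formula---the worked example confirms the second subscript should be the \emph{target} $j$---and your self-loop worry dissolves once you note that the residual $\emp$ left by $\enter{i}\step{\enter{i}}\emp$ makes $\emp\seqc(\sum_{k\in K_{i,i}} a_k)\uks$ step to the syntactically identical expression, so $p_i'\step{a_k}p_i'$ literally.
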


\begin{corollary}
  For every finite automaton $\FA{}$ there exists an instance
  of \ACPszo{} with a suitable finite set of actions \Act{} and a
  handshaking communication function $\gamma$ such that $\FA{}$ is
  \ACPszo{}-expressible up to isomorphism.
\end{corollary}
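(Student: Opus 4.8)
The plan is to verify that the bijection $\Phi$ defined by $s_i \mapsto \encap{\ActC}{p_0\merge\cdots\merge p_{i-1}\merge p_i'\merge p_{i+1}\merge\cdots\merge p_n}$ (with $p_0$ replaced by $p_0'$ in the initial state, matching $p_{\FA{}}$) is well-defined and respects the transition relation and termination predicate in both directions. First I would make two preliminary observations about the components in isolation, using the operational rules of Table~\ref{tab:ACPszoSOS}. Observation one: each $p_i$ has exactly the single outgoing transition $p_i \step{\enter{i}} p_i'$, because $p_i = \emp\seqc(\enter{i}\seqc\cdots)\uks$ and the only way to start executing the star body is via the leading $\enter{i}$; moreover $\term{p_i}$ holds (rule~\ref{osrule:starterm}). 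Observation two: $p_i'$ has no outgoing $\enter{j}$ transitions for any $j$ — once control is ``held'', the only enabled actions are the self-loop actions $a_k$ for $k\in K_{i,i}$ (which return to a state of the same shape $p_i'$), or a $\leave{k}{i}$ action returning control by passing to $p_i$, or, if $\term{s_i}$, the optional $\emp$-summand making $p_i'$ terminating. Crucially $\term{p_i'}$ iff $\term{s_i}$, and $\term{p_i'}$ never gives a transition to a different shape.

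Next I would analyse the transitions of $P_i := \encap{\ActC}{p_0\merge\cdots\merge p_i'\merge\cdots\merge p_n}$, i.e.\ the image $\Phi(s_i)$. The key point is that all of $\enter{j}$ and $\leave{k}{j}$ lie in $\ActC$, so rule~\ref{osrule:encap} blocks every non-communicating occurrence of a control action; the only transitions that survive encapsulation are (a) the self-loop actions $a_k$ with $k\in K_{i,i}$, coming from $p_i'$ alone via rules~\ref{osrule:merge1}/\ref{osrule:merge2} (these stay within $P_i$), and (b) communications, which by rule~\ref{osrule:comm} must pair a $\leave{k}{i}$ from the controlling component $p_i'$ with an $\enter{i}$ from some other component $p_j$ — but $\gamma(\enter{i},\leave{k}{i})$ is defined only in the component whose index matches, so the communication is forced to be between $p_i'$ (emitting $\leave{k}{i}$, valid only when $k\in K_{i,j}$ for some $j\ne i$) and $p_j$ (emitting $\enter{j}$ with $j$ the intended target, requiring $j=i$... here I must be careful: the communication $\gamma(\enter{j},\leave{k}{i})=a_k$ requires $i=j$ in the definition of $\gamma$, so $\leave{k}{i}$ communicates only with $\enter{i}$), after which the result is $a_k$ and the new global state has $p_i$ back in its default shape and $p_j$ advanced to $p_j'$, i.e.\ it is $P_j = \Phi(s_j)$, and such a transition exists precisely when $k\in K_{i,j}$, i.e.\ when $s_i \step{a_k} s_j$ in $\FA{}$. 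This establishes the forward simulation: every transition $P_i\step{a_k} P_j$ of the associated automaton corresponds to a transition $s_i\step{a_k}s_j$ of $\FA{}$ (self-loops $k\in K_{i,i}$ included, since $s_i\step{a_k}s_i$ iff $k\in K_{i,i}$), and conversely every $\FA{}$-transition is matched, so $\Phi$ is a bijection onto the reachable states commuting with $\stepsym{}$.

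Then I would check termination and initial states: $\term{P_i}$ iff $\term{p_i'}$ (rules~\ref{osrule:merge3}, \ref{osrule:encap2}, together with $\term{p_j}$ for all $j\ne i$), which by Observation two holds iff $\term{s_i}$; and the initial state $p_{\FA{}}$ is by definition $\encap{\ActC}{p_0'\merge p_1\merge\cdots\merge p_n}=\Phi(s_0)$. Finally I would note reachability is automatic: since $\FA{}$ is a \nfa{} every $s_i$ is reachable from $s_0$, and the commuting square transfers this to the $P_i$, so $\Reach{p_{\FA{}}}=\{\Phi(s_i)\mid 0\le i\le n\}$ and $\Phi$ is the required isomorphism. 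The corollary is then immediate: take $\Act = A\cup\ActC$ (finite) and the handshaking $\gamma$ above; since isomorphic automata are bisimilar, $\FA{}$ is \ACPszo{}-expressible.

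The main obstacle I anticipate is the careful bookkeeping in the communication analysis — specifically, pinning down that no \emph{unintended} communications occur. One must rule out two $p_i'$-components both trying to act (impossible, since the global state $P_i$ has only one component in the $p_i'$-shape, as an easy invariant maintained by every transition), rule out a $\leave{k}{i}$ communicating with the ``wrong'' $\enter{j}$ (ruled out by $\gamma$ being defined only when indices match), and confirm that the self-loop actions $a_k$ for $k\in K_{i,i}$ genuinely stay inside $P_i$ and do not accidentally synchronise (they cannot, since $a_k\notin\ActC$ and $\gamma$ is undefined on non-control actions). None of this is deep, but it requires stating and maintaining the single-controller invariant explicitly, which is the one place where a hurried proof could go wrong.
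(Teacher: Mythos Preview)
Your approach is exactly the paper's: the paper simply declares that the map
$s_i \mapsto \encap{\ActC}{p_0\merge\cdots\merge p_{i-1}\merge p_i'\merge p_{i+1}\merge\cdots\merge p_n}$
is an isomorphism (this is the content of the preceding Theorem, stated without proof), and the Corollary is then immediate by taking $\Act=A\cup\ActC$ with the specified $\gamma$. You are filling in the verification that the paper leaves implicit, and your overall plan---check transitions in both directions, check termination, check initial state, maintain the single-controller invariant---is sound.

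The one place where your argument wobbles is the communication analysis, and you flag it yourself. The paper's general formula for $p_i$ contains what is evidently a typo: it writes $\leave{k}{i}$ where the worked example (e.g.\ $p_0$ contains $\leave{0}{1}$ and $\leave{1}{1}$, targeting state~$1$) and the definition of $\gamma$ make clear it should be $\leave{k}{j}$, the second subscript naming the \emph{target} state. With the correct reading, $p_i'$ emits $\leave{k}{j}$ for each $j\neq i$ and $k\in K_{i,j}$; the only possible communication partner is $p_j$, which emits $\enter{j}$; and $\gamma(\enter{j},\leave{k}{j})=a_k$, after which $p_i'$ returns to $p_i$ and $p_j$ advances to $p_j'$. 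Your paragraph tries to push through with the typo'd index and ends up asserting both that $\leave{k}{i}$ communicates only with $\enter{i}$ and that the result advances $p_j$ to $p_j'$---those two claims are inconsistent. Once you correct the subscript the argument goes through cleanly, and your single-controller invariant does exactly the work you say it does.

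One minor point: a finite automaton need not have every state reachable from $s_0$, so your final reachability remark is not literally guaranteed; but one may assume without loss of generality that $\FA{}$ has no unreachable states, since discarding them does not change the automaton associated with the initial state.
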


\section{Conclusion}\label{sec:conclusion}

In this paper we have investigated the effect on the expressiveness of
regular expressions modulo bisimilarity if different forms of parallel
composition are added.  We have established an expressiveness
hierarchy that can be briefly summarised as:
  $\BPAszo{}\lexp\PAszo{}\lexp\bigcup_{\gamma}\ACPszo{}$.
Furthermore, while not every \nfa{} can be expressed modulo
bisimilarity with a regular expression, it suffices to add a form of
\ACP{}-style parallel composition, with handshaking communication and
encapsulation, to get a language that is sufficiently expressive to
express all \nfas{} modulo bisimilarity.  This result should be
contrasted with the well-known result from automata theory that every
non-deterministic finite automaton can be expressed with a regular
expression modulo language equivalence.

As an important tool in our proof, we have characterised the strongly
connected components in \BPAszo{} and \PAszo{}. An interesting open
question is whether the two given characterisations are complete, in
the sense that a \nfa{} is expressible in \BPAszo{} or \PAszo{} iff
all its strongly connected components satisfy our
characterisation. If so, then our characterisation would
constitute a useful complement to the characterisation of \cite{BCG07}
and perhaps lead to a more efficient algorithm for deciding whether a
non-deterministic automaton is expressible.

In \cite{BFP01} it is proved that every finite transition system
without intermediate termination can be denoted in $\ACPszt$ up to
\emph{branching} bisimilarity \cite{GW96}, and that $\ACPsz$ modulo
(strong) bisimilarity is strictly less expressive than $\ACPszt$. In
contrast, we have established that every \nfa{} (i.e., every finite
transition system not excluding intermediate termination) is denoted
by an $\ACPszo$ expression. It follows that $\ACPszo$ and $\ACPszot$
are equally expressive.

An interesting question that remains is whether it is possible to omit
constructions from \ACPszo{} without losing expressiveness.
We conjecture that $\encap{H}{\_}$ cannot be omitted
without losing expressiveness: encapsulating $c$ in the \ACPszo{}
expression $\emp \seqc (a \seqc b)\uks \seqc b \parc c$, which is used in
Section~\ref{sec:pa-acp} to show that \PAszo{} is less expressive than
\ACPszo{}, yields a transition system that we think cannot be expressed
in \ACPszo{} without encapsulation.

\paragraph{Acknowledgement}
We thank Leonardo Vito and the other participants of the Formal
Methods seminar of 2007 for their contributions in an early stage of
the research for this paper.


\begin{thebibliography}{10}
\providecommand{\bibitemstart}[1]{\bibitem{#1}}
\providecommand{\bibitemend}{}
\providecommand{\bibliographystart}{}
\providecommand{\bibliographyend}{}
\providecommand{\url}[1]{\texttt{#1}}
\providecommand{\urlprefix}{Available at }
\providecommand{\bibinfo}[2]{#2}
\bibliographystart

\bibitemstart{BCG07}
\bibinfo{author}{J.~C.~M. Baeten}, \bibinfo{author}{F.~Corradini} \&
  \bibinfo{author}{C.~A. Grabmayer} (\bibinfo{year}{2007}):
  \emph{\bibinfo{title}{A Characterization of Regular Expressions under
  Bisimulation}}.
\newblock {\sl \bibinfo{journal}{Journal of the {ACM}}}
  \bibinfo{volume}{54}(\bibinfo{number}{2}).
\bibitemend

\bibitemstart{BG87}
\bibinfo{author}{J.~C.~M. Baeten} \& \bibinfo{author}{R.~J. van Glabbeek}
  (\bibinfo{year}{1987}): \emph{\bibinfo{title}{Merge and Termination in
  Process Algebra}}.
\newblock In: \bibinfo{editor}{Kesav~V. Nori}, editor: {\sl
  \bibinfo{booktitle}{FSTTCS}}, {\sl \bibinfo{series}{Lecture Notes in Computer
  Science}} \bibinfo{volume}{287}, \bibinfo{publisher}{Springer}, pp.
  \bibinfo{pages}{153--172}.
\bibitemend

\bibitemstart{BBP94}
\bibinfo{author}{J.~A. Bergstra}, \bibinfo{author}{I.~Bethke} \&
  \bibinfo{author}{A.~Ponse} (\bibinfo{year}{1994}):
  \emph{\bibinfo{title}{Process Algebra with Iteration and Nesting}}.
\newblock {\sl \bibinfo{journal}{The Computer Journal}}
  \bibinfo{volume}{37}(\bibinfo{number}{4}), pp. \bibinfo{pages}{243--258}.
\bibitemend

\bibitemstart{BFP01}
\bibinfo{author}{J.~A. Bergstra}, \bibinfo{author}{W.~Fokkink} \&
  \bibinfo{author}{A.~Ponse} (\bibinfo{year}{2001}):
  \emph{\bibinfo{title}{Process Algebra with Recursive Operations}}.
\newblock In: \bibinfo{editor}{J.~A. Bergstra}, \bibinfo{editor}{A.~J. Ponse}
  \& \bibinfo{editor}{S.~A. Smolka}, editors: {\sl \bibinfo{booktitle}{Handbook
  of Process Algebra}}, \bibinfo{publisher}{Elsevier}, pp.
  \bibinfo{pages}{333--389}.
\bibitemend

\bibitemstart{BK84}
\bibinfo{author}{J.~A. Bergstra} \& \bibinfo{author}{J.~W. Klop}
  (\bibinfo{year}{{1984}}): \emph{\bibinfo{title}{Process algebra for
  synchronous communication}}.
\newblock {\sl \bibinfo{journal}{{Information and Control}}}
  \bibinfo{volume}{{1/3}}(\bibinfo{number}{{60}}), pp.
  \bibinfo{pages}{109--137}.
\bibitemend

\bibitemstart{CLR01}
\bibinfo{author}{T.~H. Cormen}, \bibinfo{author}{C.~E. Leiserson},
  \bibinfo{author}{R.~L. Rivest} \& \bibinfo{author}{C.~Stein}
  (\bibinfo{year}{2001}): \emph{\bibinfo{title}{Introduction to Algorithms}}.
\newblock \bibinfo{publisher}{MIT Press}, \bibinfo{edition}{2nd} edition.
\bibitemend

\bibitemstart{GW96}
\bibinfo{author}{R.~J.~van Glabbeek} \& \bibinfo{author}{W.~P. Weij\-land}
  (\bibinfo{year}{1996}): \emph{\bibinfo{title}{Branching Time and Abstraction
  in Bisimulation Semantics}}.
\newblock {\sl \bibinfo{journal}{Journal of the ACM}}
  \bibinfo{volume}{43}(\bibinfo{number}{3}), pp. \bibinfo{pages}{555--600}.
\bibitemend

\bibitemstart{HMU06}
\bibinfo{author}{J~.E. Hopcroft}, \bibinfo{author}{R.~Motwani} \&
  \bibinfo{author}{J.~D. Ullman} (\bibinfo{year}{2006}):
  \emph{\bibinfo{title}{Introduction to Automata Theory, Languages, and
  Computation}}.
\newblock \bibinfo{publisher}{Pearson}.
\bibitemend

\bibitemstart{KV85}
\bibinfo{author}{C.~J.~P. Koymans} \& \bibinfo{author}{J.~L.~M. Vrancken}
  (\bibinfo{year}{1985}): \emph{\bibinfo{title}{Extending process algebra with
  the empty process $\varepsilon$}}.
\newblock \bibinfo{type}{Logic Group Preprint Series} \bibinfo{number}{1},
  \bibinfo{institution}{State University of Utrecht}.
\bibitemend

\bibitemstart{Mil84}
\bibinfo{author}{R.~Milner} (\bibinfo{year}{1984}): \emph{\bibinfo{title}{A
  Complete Inference System for a Class of Regular Behaviours}}.
\newblock {\sl \bibinfo{journal}{Journal of Comput. System Sci.}}
  \bibinfo{volume}{28}(\bibinfo{number}{3}), pp. \bibinfo{pages}{439--466}.
\bibitemend

\bibitemstart{Mil89}
\bibinfo{author}{R.~Milner} (\bibinfo{year}{1989}):
  \emph{\bibinfo{title}{Communication and Concurrency}}.
\newblock \bibinfo{publisher}{Prentice-Hall International},
  \bibinfo{address}{Englewood Cliffs}.
\bibitemend

\bibitemstart{Par81}
\bibinfo{author}{D.~Park} (\bibinfo{year}{1981}):
  \emph{\bibinfo{title}{Concurrency and automata on infinite sequences}}.
\newblock In: \bibinfo{editor}{P.~Deussen}, editor: {\sl
  \bibinfo{booktitle}{Proc. of the 5th GI Conference}}, \bibinfo{series}{LNCS
  104}, \bibinfo{publisher}{Springer-Verlag}, \bibinfo{address}{Karlsruhe,
  Germany}, pp. \bibinfo{pages}{167--183}.
\bibitemend

\bibitemstart{Plo04}
\bibinfo{author}{G.~D. Plotkin} (\bibinfo{year}{2004}): \emph{\bibinfo{title}{A
  structural approach to operational semantics}}.
\newblock {\sl \bibinfo{journal}{J. Log. Algebr. Program.}}
  \bibinfo{volume}{60-61}, pp. \bibinfo{pages}{17--139}.
\bibitemend

\bibitemstart{Vra97}
\bibinfo{author}{J.~L.~M. Vrancken} (\bibinfo{year}{1997}):
  \emph{\bibinfo{title}{The Algebra of Communicating Processes With Empty
  Process}}.
\newblock {\sl \bibinfo{journal}{Theor. Comput. Sci.}}
  \bibinfo{volume}{177}(\bibinfo{number}{2}), pp. \bibinfo{pages}{287--328}.
\bibitemend

\bibliographyend
\end{thebibliography}
\end{document}